\theoremstyle{definition} 
\theoremstyle{definition} 
\newtheorem {theorem} {Theorem}
\newtheorem {lemma} {Lemma}
\newtheorem{protocol}{Protocol}
\DeclareMathOperator{\Tr}{Tr}
\def\ds{\displaystyle}
\newcommand{\walkop}{U_{\text{walk}}}
\begin{document}
\author{
C. Vlachou$^{1,2,6}$, 
W. Krawec$^{3}$, 
P. Mateus$^{1,2}$,\\ 
N. Paunkovi\'c$^{1,2,6}$ and 
A. Souto$^{1,4,5}$
\\ \ \\
${}^1$ SQIG -- Instituto de Telecomunica\c c\~oes\\
${}^2$ Dep. de Matem\'atica -- Instituto Superior T\'ecnico, Universidade de Lisboa\\
${}^3$ Computer Science and Engineering Department, \\University of Connecticut, Storrs, CT 06268 USA\\
${}^4$ LaSige -- Faculdade de Ci\^encias, Universidade de Lisboa\\
${}^5$ Dep. de Inform\'atica -- Faculdade de Ci\^encias, Universidade de Lisboa\\
${}^6$ CeFEMA, Instituto Superior T\'ecnico, Universidade de Lisboa}
\title{Quantum key distribution with quantum walks}

\maketitle
\begin{abstract}
Quantum key distribution is one of the most fundamental cryptographic protocols. Quantum walks are important primitives for computing. In this paper we take advantage of the properties of quantum walks to design new secure quantum key distribution schemes. In particular, we introduce a secure quantum key-distribution protocol equipped with verification procedures against full man-in-the-middle attacks. Furthermore, we present a one-way protocol and prove its security. Finally, we propose a semi-quantum variation and prove its robustness against eavesdropping.\end{abstract}

\section{Introduction}

Quantum Key Distribution (QKD) is the most secure and practical instance of quantum cryptography. We recall that a key distribution scheme is a protocol between two parties with the purpose of sharing a common string (the key), which afterwards, they can use to communicate privately, in a pre-agreed encryption scheme. Therefore, it is required that any third party that might be eavesdropping is not able to extract information about the key, and thus compromizing the privacy of the communication. Bennet and Brassard~\cite{ben:bra:84} in 1984, and Ekert~\cite{eke:91} in 1991, proposed the first QKD protocols, upon which all QKD protocols are based. Since then a lot of modifications and improvements have been proposed in order to achieve unconditionally secure and practical QKD schemes, by taking advantage of the physical laws of quantum mechanics. For a review, see~\cite{sca:bec:cer:dui:dus:lut:pee:09}. Several QKD experiments over long distances have been reported~\cite{urs:etal:07,tak:sas:tam:koa:15,pug:kai:bou:jin:sul:agn:ani:mak:cho:hig:jen:17,idquantique}, and QKD is already commercial.\footnote{Currently there are three companies offering commercial QKD systems: ID Quantique (Geneva), MagiQ Technologies, Inc. (New York) and QuintessenceLabs (Australia).} Furthermore, the recent successful launch of a satellite~\cite{yin:etal:17} paved the way for intercontinental QKD.

Quantum walks (QW) have been introduced in 1993, in~\cite{aha:dav:zag:93}, as the quantum analogue of classical random walks. Since then, they have been playing a major role in quantum computing theory, as their applications vary from quantum algorithms~\cite{far:gut:98,chi:etal:03,amb:03,san:08,por:13} to universal quantum computing schemes~\cite{chi:09,lov:coo:eve:tre:ken:10,chi:gos:web:13}. 

Recently, the application of QWs to the creation of actual quantum cryptographic protocols has been investigated.  For instance, in \cite{roh:fit:gil:15}, Rohde {\em et al.}, proposed a limited form of quantum homomorphic encryption using multi-particle QWs. In their protocol, a server could manipulate data sent by a client in such a way that, first, the server has limited information on the client's data while, second, the client has limited information on the server's computation.

In this paper, we revisit the public-key cryptosystem~\cite{vla:rod:mat:pau:sou:15}, which is based on QWs, in order to construct secure QKD protocols. First, we suitably modify~\cite{vla:rod:mat:pau:sou:15}, so that the quantum state generated by means of a QW encodes the secret key as opposed to the message; such a key could be used later as input to a one-time-pad encryption system gaining information theoretic security for message delivery. Our motivation is that QKD schemes have several advantages, which we present in due course, over public-key cryptosystems. The modification of the original public-key system is non-trivial, however, and requires care as we can no longer rely on the existence of a trusted mechanism for public-key delivery (such as a public-key infrastructure), as is typically assumed in quantum public-key cryptography \cite{nik:08,sey:nik:alb:12,vla:rod:mat:pau:sou:15}. 

While the above QKD protocol is two-way, i.e., both Alice ($A$) and Bob ($B$) perform QW operations, we also construct a one-way QKD protocol, where again the key is encoded in a QW state. In this case, it is only Alice that chooses randomly the precise QW to encode the key, while Bob is randomly choosing in which basis (computational or QW) to measure in order to obtain it. After disclosing their choices by means of classical communication, they are able to establish a shared key. We prove that the protocol is secure against general attacks, even if the eavesdropper Eve $(E)$ has great advantage over Alice and Bob. 

As a third contribution in this paper, we propose a new \emph{semi-quantum} key-distribution (SQKD) protocol based on QWs.  Semi-quantum cryptography was first introduced in 2007 by Boyer {\em et al.}, in \cite{boy:ken:mor:07,boy:gel:ken:mor:09} as a way to study ``how quantum'' does a protocol need to be in order to gain an advantage over its classical counterpart -- namely, how quantum do the parties need to be in order to establish a secret key secure against an all powerful adversary.  Using classical communication alone, this task is impossible -- indeed, any key distribution protocol, relying only on classical communication, cannot be unconditionally secure and, instead, requires computational hardness assumptions to be made on the adversary. On the other hand, QKD protocols do have provable unconditional security. A semi-quantum protocol places severe restrictions on one of the participating users (typically Bob) in that he may only operate in a ``classical'' or ``semi-quantum'' manner. Namely, this limited user can only directly work with the computational $Z$ basis.  No restrictions are placed on the other participant Alice, and of course, no restrictions are placed on Eve.

The paper is organized as follows: In Section~\ref{sec:perliminareis}, we provide a brief introduction to QWs, with all the information and notation used throughout the paper. In Section~\ref{sec:qkdscheme}, we present a secure two way QKD scheme based on QWs, which is a modification of the public-key cryptosystem in Ref.~\cite{vla:rod:mat:pau:sou:15}. We give motivation for this modification and furthermore, we propose two different verification procedures against full man-in-the-middle attacks. In Section~\ref{sec:oneway}, we introduce a one-way QKD protocol, which we prove to be unconditionally secure, by reducing it to an equivalent entanglement-based protocol. We provide our numerical results for the optimal choice of the QW parameters that maximize the noise tolerance of the protocol. Finally, in Section~\ref{sec:semi-qkdscheme} we provide a SQKD protocol and we show its robustness against eavesdropping.

\section{Quantum Walk Preliminaries}
\label{sec:perliminareis}
In this paper we consider QWs on a circle. In this case, the walker hops along discrete positions on a circle. The Hilbert space $\mathcal{H}$, describing the QW, is the tensor product of the positions Hilbert space $\mathcal{H}_p$ 
and the coin Hilbert space $\mathcal{H}_c$, i.e. $\mathcal{H}=\mathcal{H}_p\otimes\mathcal{H}_c$.
The positions Hilbert space is spanned by the points on a circle $\{\ket{x} | x \in \{0,\cdots,P-1\}\}$, while $\mathcal{H}_c$ is spanned by the two possible coin states $\{\ket{R},\ket{L}\}$, corresponding to heads and tails.
The evolution for one step of the QW is given by the unitary operator $$U=S\cdot (I_p\otimes R_c)$$
where $I_p$ is the identity operator in $\mathcal{H}_p$, $R_c \in SU(2)$ is a rotation in $\mathcal{H}_c$, which in matrix form we can write it as:
\begin{equation}\label{eq:coin-op}
R_c(\theta) = 	\left( 
				\begin{array}{cc}
					\cos(\theta) & \sin(\theta) \\
					-\sin(\theta) & \cos(\theta)
				\end{array}
			\right),
\end{equation} 
and
\begin{eqnarray}\label{eq:shift-op}
S=\sum_{x=0}^{P-1}\big[\ket{x+1 \!\!\pmod P}\bra{x}\otimes\ket{R}\bra{R}\\ 
+\ket{x-1 \!\!\pmod P}\bra{x}\otimes\ket{L}\bra{L}\big]\end{eqnarray}
is the shift operator that moves the walker one position to the right or to the left on the circle, depending on its coin state. Notice that, since we are on a circle, the $P$-th position is identified with the $0$ position.

\section{Quantum Walk Key-Distribution Scheme}
\label{sec:qkdscheme}
In this section we introduce a QKD scheme based on QWs. In this context, the key for the QKD is encrypted as the message in the public-key cryptosystem introduced in~\cite{vla:rod:mat:pau:sou:15}. This modification is motivated by the fact that QKD schemes are more flexible than public-key protocols, as the key can be used by both Alice and Bob to send or authenticate messages. Also, more post-processing techniques (e. g. privacy amplification) can be applied, since we have as input a random string and not a plaintext message. In the latter case we should be careful during the post-processing not to degrade the message (we are left with less techniques). Furthermore, in the case of information leakage, we can safely abort the protocol, while during message transmission it would be late for that. 

Our QW-QKD scheme is depicted in Fig.~\ref{fig:protQKD} and presented below. We assume that the key can be chosen among $P$ possible keys. We also assume that the QW can be chosen from a prefixed discrete set known by both parties.

%\begin{widetext}
\begin{center}
\begin{figure}[H]
  \centering
  \includegraphics[width=\textwidth]{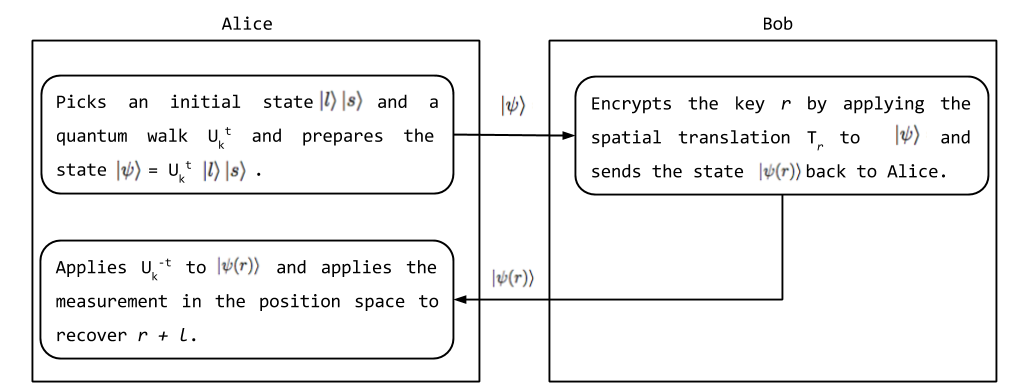}
\caption{Description of the basic steps of Protocol~\ref{prot:qkd}.                                     }
\label{fig:protQKD}
\end{figure}
\end{center}
%\end{widetext}

\begin{protocol} Quantum key-distribution scheme\
\label{prot:qkd} 
\begin{description}
\item[\hspace{6mm}{\bf Inputs for the protocol}]\
	\begin{itemize}
		\item Key: 

			$r\in \{0, \dots, P-1 \}$, i.e., a key of at most $\log P$ bits, chosen by Bob uniformly at random;

		\item Quantum state generation:\
		
		The QW operator $U_k$ with $k\in \mathcal K=\{1,2,\ldots,K\}$,
			the number of steps $t \in \mathcal{T} = \{ T_0, \dots, T_{max} \} \subset \mathbb{N},$
		and the initial state $\ket{l}\otimes\ket{s}$, where	$l \in \{ 0, \dots, P-1 \}, $
			$s \in \{R,L\}.$
	\end{itemize}
In the above, $U_k$, the QW operator is defined as $U_k=S\cdot (I_p\otimes R_c (\theta_k))$, where $S$ is the shift operator and $R_c(\theta_k)$ is a rotation of $\theta_k=k\cdot 2\pi/K$ in the coin space, see Eq.~\ref{eq:coin-op} and~\ref{eq:shift-op}.  \vspace{3mm}

\item[\hspace{6mm}{ \bf Quantum state generation} ]\
	\begin{itemize}
	\item Alice chooses uniformly at random $l \in \{0, \dots ,P-1 \} $ and $s \in\{R,L\}$, and generates the initial state $\Ket{l}\Ket{s}$.
	
	\item Then she chooses, also at random, the QW 
	$U_k=S\cdot \big(I_p\otimes R_c (\theta_k)\big)$ 
	and the number of steps $t\in\mathcal T$.   
	
	\item Finally, she generates the quantum state:
	$$
	\Ket{\psi} 	= U_k^t\Ket{l}\Ket{s} 
				= \big[S\cdot \big(I_p\otimes R_c (\theta_k)\big)\big]^t \Ket{l}\Ket{s},
	$$
	and sends it to Bob.
	\end{itemize}
\vspace{3mm}
\item[\hspace{6mm}{\bf Key encryption} ]\
	\begin{itemize}
	
	\item Upon obtaining the quantum state $\Ket{\psi}$	from Alice, Bob encrypts the key $r$ by applying spatial translation $T_r=\sum_{i=0}^{P-1}\ket{i + r \pmod{P}} \bra{i}$ to obtain: 
		\[\Ket{\psi(r)} = (T_r\otimes I_c) \Ket{\psi},\] 
		where $I_c$ is the identity operator in the coin space. 
	\item Bob sends $\Ket{\psi(r)}$ to Alice.
	\end{itemize}
\vspace{3mm}	
	\item[\hspace{6mm}{\bf Key decryption}]\
	\begin{itemize}
	\item Alice applies $U_k^{-t}$ to the state $\Ket{\psi(r)}$.

	\item She performs the position measurement 
	$$
	M = \sum_{i=0}^{P-1}\Ket{i} \Bra{i}\otimes I_c
	$$ 
	and obtains the result $i_0$. 
	\\The key sent by Bob is $r= i_0 - l \!\!\pmod {P}$.
\end{itemize}
\end{description}
\end{protocol}

It is clear, from the design of the protocol and the proof of correctness of the original quantum encryption scheme presented in \cite{vla:rod:mat:pau:sou:15} that, if no one interferes with the quantum states, then the protocol is correct and at the end, Alice and Bob will share a common string of length $\log P$, that they can use as a key. In the next section we prove the security of the protocol.

\subsection{Security of the Scheme}
In~\cite{vla:rod:mat:pau:sou:15}, the authors use the Holevo theorem to show that Eve can extract information about the key, by means of the quantum states $\ket{\psi}$ and $\ket{\psi(r)}$ that Alice and Bob exchange, only with negligible probability. Here, we do not present this proof of security for the sake of briefness, but the interested reader can find it in~\cite{vla:rod:mat:pau:sou:15} with all the details. There, it was shown that the protocol is secure if the size of the space of parameters, which are chosen uniformly at random, is exponentially large with respect to the length of the message transmitted (the key in our case of a QKD protocol), see Equation (19) and the comment below it, in Section 3.2 of~\cite{vla:rod:mat:pau:sou:15}. Moreover, for the protocol to be efficient, it was shown that the size of the $\mathcal T$ (set of possible number of steps) should be polynomial with respect to the length of the message/key transmitted, see Section 3.3 of~\cite{vla:rod:mat:pau:sou:15}. Therefore, to maintain the protocol's security, it is necessary that the size of $\mathcal K$, the set of possible coin unitaries $U_k$, is exponentially large with respect to the size of the key. As a consequence, while it is possible to fix the number of steps of the QW, and thus somewhat simplify its implementation, to maintain the security, it is necessary to keep the parameter $k$ of the protocol. 

Another type of attack that Eve can perform is a full man-in-the-middle attack, in which she impersonates Alice to Bob and vice versa, while they think that they are communicating directly. This attack gives Eve the chance to intercept and alter the communication between them. In public-key cryptosystems such attacks can be prevented by using a public-key infrastructure, which is assumed to work as a trusted third party. In our QKD modification though, such an assumption could not be used, therefore we should complete the security analysis of the scheme, by taking into account full man-in-the-middle attacks. To this end, we propose two different verification procedures, that allow Alice and Bob to verify that what they receive is actually coming from each other and not from an eavesdropper pretending to be either of them. We should note that for both verification methods, Alice and Bob need to share a classical public authenticated channel (a common requirement in QKD protocols, such as the well-known BB84 scheme~\cite{ben:bra:84}).

\subsubsection{Standard Verification}
\label{sec:stdver}
The first technique we propose is a standard cut-and-choose verification, which is achieved by adding redundancy to our scheme. Clearly, the verification is needed twice in our protocol: once when Alice sends the QW state to Bob and once when Bob sends the encoded key to Alice.\\ 

\textbf{Verification 1}: \textit{Bob verifies that it was Alice who sent him the quantum state.}

It is needed to prevent Eve from sending her choice of quantum states to Bob, which would allow her to read the encrypted key while he is sending it back to Alice. 
	\begin{itemize}
		\item Alice sends to Bob $\bigotimes_{i=1}^{m}\ket{\psi_i}$, that is, several quantum states $\ket{\psi_i}$, generated by a QW as described in the previous section. Each $\ket{\psi_i}$ is generated using independently chosen walk parameters and initial states $(k_i,t_i,l_i,s_i)$.
			
		\item After Bob receiving $\bigotimes_{i=1}^{m}\ket{\psi_i}$, Alice, through a classical authenticated channel, sends him a string $v=v_1v_2\ldots v_m$ of $m$ bits, such that $v_i=1$ if the corresponding $\ket{\psi_i}$ is going to be used for verification and $v_i=0$ otherwise, that is, if the corresponding $\ket{\psi_i}$ will be used by Bob to encode part of the key. Through the classical channel, she also sends $(j,k_j,t_j,l_j,s_j)$, for some uniformly at random chosen $j$'s that belong in the set $\{1,\ldots,m\}$. Let the number of these $j$'s be $m/3$.
		\item Bob verifies that for all these $j$'s, the received states $\rho_j=\ket{\psi_j}\bra{\psi_j}$ are indeed equal to the pure states 
		$$\Ket{\psi_j} 	= U_{k_j}^{t_{j}}\Ket{l_j}\Ket{s_j}. $$
		In order to verify that, he applies $U_{k_j}^{-t_j}$ to the states $\ket{\psi_j}$, for all $j$ and then performs a measurement for each $j$ in the positions space as well as in the spin space. This measurement (for each $j$) is described by the operator:
		\[
		\begin{array}{rcl}
		M_{l_j,s_j}&=&\ds \sum_{l_j,s_j}\alpha_{l_j,s_j}\ket{l_j,s_j}\bra{l_j,s_j}\\
		&=&\ds \sum_{l_j}l_j\ket{l_j}\bra{l_j}\otimes\sum_{s_j}s_j\ket{s_j}\bra{s_j}.
		\end{array}
		\]
		This way, he traces out all these $\ket{\psi_j}$'s and he is left with $2m/3$ quantum states. We call the reader's attention to the fact that if the verification fails for any $j$, the protocol is stopped.
 		\end{itemize}

\textbf{Verification 2}: \textit{Alice verifies that it was Bob who sent her the encrypted key.}\\

 This procedure is needed to prevent Eve from sending to Alice a message that would decrypt a key different from the one sent by Bob. In this case, Alice and Bob would not be able to communicate, while Eve would be able to decrypt messages sent by Alice (not vice versa). To prevent this from happening, Alice and Bob repeat the verification procedure 1, with the roles switched. In particular, the two are performing the following steps:
\begin{itemize}
	\item Bob encrypts $r_{i}\in \{0,\ldots,P-1\}$ in each of the states of the remaining product state $\bigotimes_{i=1}^{2m/3}\ket{\psi_i}$ as follows:
	 $$\Ket{\psi(r_{i})} = (T_{r_i}\otimes I_c) \Ket{\psi_{i}}, \forall i\in\{1,\ldots,2m/3\}$$
	 that is, translating each state $\ket{\psi_i}$ by $r_i$ in the positions space, leaving the spin part of the state unaltered. 
	 \item He sends the product state $\bigotimes_{i=1}^{2m/3}\Ket{\psi(r_{i})}$ to Alice.
	 \item Then he chooses $m/3 $ uniformly at random $j'$'s out of the $2m/3$ unused indices from the previous verification procedure. Through the classical public authenticated channel he sends a classical string $v'=v'_1v'_2\ldots v'_{2m/3}$ of $2m/3$ bits, such that $v'_i=1$ if the corresponding $\ket{\psi(r_i)}$ is going to be used for verification and $v'_i=0$ otherwise, that is, if the corresponding $\ket{\psi(r_i)}$ contains part of the key. For each $j'$ chosen (for which $v'_i = 1$), he also sends through the classical public authenticated channel the index and the respective $r_{j'}$'s used to generate the state $\ket{ \psi(r_i)}$. 	
	 \item In the last step, Alice applies $U_{k_i}^{-t_i}$ on the $2m/3$ states $\Ket{\psi(r_i)}$ and then, for each $i$, she performs a measurement on the positions space. Let the outcomes be denoted by $\alpha_i, i\in \{1,\ldots,2m/3\}$. For all the indices she computes $r_i=\alpha_i-l_i,$ where $l_i$ are the initial positions on the circle that she used for the generation of the quantum states $\ket{\psi_i}.$ Finally, for each $j', r_{j'}$ sent by Bob, Alice verifies the consistency of their results.
	 \item The key is given by the concatenation of the bits $r_i$ that were not used during the two verification procedures and it has $m\cdot (\log P)/3$ bits. Usually, the choice of $m$ is dependent on the desired length, $\log P$, of the key, and in order to make the success probability of a man-in-the-middle attack negligible on $\log P$, it is common to use $m=(\log P)/3$.
\end{itemize}

\subsubsection{Verification using maximally entangled states}
\label{sec:bellver}
In this section, we present an alternative verification procedure, which prevents Eve from trying to infer the key by first entangling her ancillas with the systems sent by Alice, and then performing an additional operation (say, a measurement) on the joint system of her ancillas and those carrying the encrypted key sent back to Alice by Bob; a method which in general would give her access to some non-negligible amount of information, so that Alice and Bob are not able to securely communicate. 
Note that this verification procedure could also be used against the previous attack in which Eve simply impersonates Alice to Bob, and vice versa. 

During the first step of the protocol (``Quantum state generation''), in addition to generating QW states
\begin{equation}
\label{states:1}
	\ket{\psi}_{qw}=U_k^t\ket{l}\ket{s}
\end{equation}
used to encode the key, for the verification purposes Alice also creates a number of Bell-like maximally entangled states
\begin{equation}
\label{states:2}
	\ket{\psi}_{qw}=\frac{1}{\sqrt{(\log 2P)!}}\sum_{i=0}^{2P-1}\ket{i}_a\ket{i}_{qw}.
\end{equation}
between the ancilla systems (denoted by $a$) and the QW systems (denoted by $qw$), each of dimension $2P$ (the dimension of the actual QW). At the end of the first step, Alice sends to Bob a random sequence of QW states, each either in the form $\ket{\psi}_{qw}$, or $\rho_{qw} = \Tr_a\ket{\psi}\bra{\psi}_{qw}$, while keeping the ancillas with her. Alice also sends through a classical public authenticated channel a classical string $v=v_1\ldots v_n$, where $v_i=0$ if the $i$-th system is going to be used for the encoding of the key, while $v_i=1$ if the $i$-th system is going to be used for verification.

The proportion of states used to obtain the key and used for the verification can be chosen in a similar way as in the previous case. Usually, the dimension of the total Hilbert space $2P$ is of the form $2^n$ which, in turn, is isomorphic to the Hilbert space resulting from the  tensor product of $n$ 2-dimensional Hilbert spaces, and thus this state can be written as the tensor product of $n$ standard two-qubit $\ket{\phi^+}$ Bell states.
 
After Bob receives the systems, he and Alice perform Bell-like measurements on the states meant for the verification and they observe a maximal violation of the Bell's inequalities, since those states are maximally entangled. This way, these states are traced out and Bob is left with the states~\eqref{states:1} in which he will encode the key (as previously).

The same procedure is repeated again, when Bob sends the encoded key to Alice. He will send a sequence of states, some of the form $(\hat{T}_r\otimes \hat{I}_c)U_k^t\ket{l}\ket{s}$, in which part of the key is encoded and some of the form (4) (with his ancillary system $\ket{i}_B$ maximally entangled to the system sent to Alice), which are going to be used for the verification, as explained above. In the end of the key decryption phase and if all the verifications were okay, Alice will concatenate the parts of the key to obtain the full key.

\subsection{Efficiency and quantum memory requirements}
In~\cite{vla:rod:mat:pau:sou:15} it has been proven that the protocol is efficient, i.e., it requires only polynomial time (on the length of the message, say $n$) to transfer $n$ bits of information encoded in $n+1$ qubits. By introducing the verification steps in this QKD scheme we increase the complexity of the system to $n^2$, in order to make the probability of eavesdropping negligible. However, we should notice that, out of this scheme, the size of the key that Alice and Bob share at the end is also increased to $n^2/3$, considering $m=n$. Therefore, the number of bits in the key is linear in the number of qubits sent to Bob. As a conclusion, our QKD scheme is efficient, since the complexity increased, but only polynomially.

As already mentioned in the Introduction, the lack of stable quantum memories is a major issue in quantum cryptography, since it is a practical constraint that is not likely to be solved, at least in the near future. Short-term quantum memories already exist, however it is not always straightforward to argue about the security of a protocol, relying on their existence. In our case, though, things are quite clear. If Eve does not interfere, Alice and Bob do not need quantum memories to execute the protocol, thus the key distribution is independent of such practical constraints. However, the presence of Eve and the need of verification for Alice and Bob introduce memory requirements for {\em all} the parties. 

Below, we present the memory requirements for the case of Section~\ref{sec:stdver}, noting that the case of Section~\ref{sec:bellver} is analogous. To conduct her attack, Eve needs a stable quantum memory, in order to keep the states she intercepted by Alice, while waiting for Bob to encrypt and send the key. Subsequently, she will encode it in Alice's states and send it to her. Also, in this scenario, Alice and Bob need a quantum memory, in order to perform the verification. They need to save the quantum states for some time, while waiting for the other party to send the classical information. Observe that Eve's memory should be more stable than Alice's and Bob's, as the time Eve needs to save the quantum states for, is clearly longer than the time that Alice and Bob need for the same purpose.

Hence, we conclude that our QKD scheme is  secure, as long as Alice and Bob have at least as powerful equipment as the adversary Eve. Obviously, if the adversary is technologically more advanced, then virtually any real-life implementation of a security protocol becomes potentially vulnerable.

%%%%%%%%%%%%%%%%%%%%%%%%%%%%%%%%%%%%%%%%%%%%%%%%%%%%%%%%%%%%%%%%%%%%%%%%%%%%%%%%%%%%%%%%%%%%%%%%%%%%%%%%%%%%%%%%
%%%%%%%%%%%%%%%%%%%%%%%%%%%%%%%%%%%%%%%%%%     One-Way 1st    %%%%%%%%%%%%%%%%%%%%%%%%%%%%%%%%%%%%%%%%%%%%%%%%%%
%%%%%%%%%%%%%%%%%%%%%%%%%%%%%%%%%%%%%%%%%%%%%%%%%%%%%%%%%%%%%%%%%%%%%%%%%%%%%%%%%%%%%%%%%%%%%%%%%%%%%%%%%%%%%%%%

\section{One-way quantum walk key-distribution protocol}
\label{sec:oneway}
In this section we present a one-way QKD protocol based on QWs and we prove that it is secure. First, we state it in its prepare-and-measure form. 
The protocol procedure is depicted in Figure~\ref{fig:prot1}.\\

%\begin{widetext}
\begin{center}
\begin{figure}[H]
  \centering
  \includegraphics[width=\textwidth]{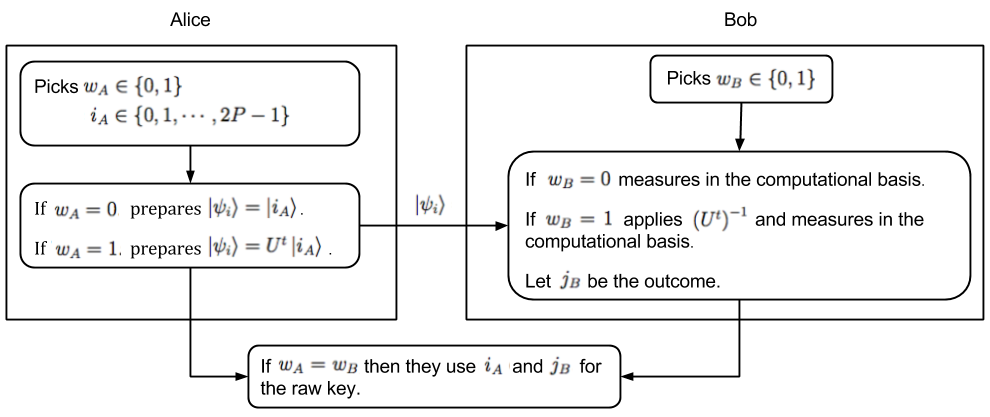}
\caption{Description of the basic steps of Protocol~\ref{alg:prot1}.}
\label{fig:prot1}
\end{figure}
\end{center}
%\end{widetext}

\begin{protocol}
\label{alg:prot1}
Let $\theta_k, t,$ and $P$ be publicly known where $P$ is the dimension of the position space of the QW, $t$ is the number of steps to perform the QW, and $\theta_k$ the coin parameter (see Equation~\eqref{eq:coin-op}).  Let $U_k$ be the QW operator $U_k = S\cdot (I_p\otimes R_c(\theta_k))$ that is also known by the parties (i.e., it is also publicly known) and let $F$ be an operator acting only on $\mathcal{H}_c$. $F$'s action is to ``flip'' the coin to some initial state before evolving the walk and is optional (in which case $F=I_c$). Finally, let $\ket{\psi_i} = U_k^t(I_p\otimes F)\ket{i}$ for $\ket{i} \in \mathcal{H}_p\otimes\mathcal{H}_c$.  We call the orthonormal basis $\{\ket{\psi_i}\}$ the QW basis and denote the computational basis by $Z$. This means that the QW basis is obtained from the computational basis $Z$, when performing the QW with respect to the initial state $\ket{i}$ or, in other words, the unitary operator that describes the basis change is the unitary of the QW.

The protocol consists of $N$ iterations of the following steps:
\begin{enumerate}
  \item Alice picks a random bit $w_A \in \{0,1\}$ and a value $i_A \in \{0,1, \cdots, 2P-1\}$.
    \begin{itemize}
    \item If $w_A = 0$: Alice will prepare and send to Bob the $2P$-dimensional state $\ket{\psi_i} = \ket{i_A}$.
    \item If $w_A = 1$: Alice will prepare and send to bob the $2P$-dimensional state $\ket{\psi_i} = U_k^t(I_p\otimes F)\ket{i_A}$.
    \end{itemize}

  \item Bob picks a random bit $w_B \in \{0,1\}$.
    \begin{itemize}
    \item If $w_B = 0$: Bob measures the received $2P$ dimensional state in the computational $Z$ basis resulting in outcome $j_B$.
    \item If $w_B = 1$: Bob measures in the QW basis (alternatively, he inverts the QW by applying $\left(U_k^t\right)^{-1}$ and measures the resulting state in the $Z$ basis).  The result is translated, in the obvious way, into an integer $j_B$.
    \end{itemize}
Note that he measures both the position and coin, as opposite to the previous protocol, where the measurement for the key was only on the positions space.
  \item Alice and Bob reveal, via the authenticated classical channel, their choice of $w_A$ and $w_B$.  If $w_A = w_B$, they will use their values $i_A$ and $j_B$ to contribute towards their raw key.  Otherwise, if $w_A \ne w_B$, they will discard this iteration.
\end{enumerate}

After the above process, Alice and Bob will use a cut-and-choose technique similar to Yao's ~\cite{yao:86}, to check eavesdropping by choosing a suitable subset of non-discarded iterations for parameter estimation in the usual manner (discarding those chosen iterations from the raw key). This allows them to estimate the disturbance $Q_Z$ and $Q_W$ in the $Z$ and QW bases respectively (i.e., in the absence of noise $Q_Z = Q_W = 0$).  If this disturbance is ``sufficiently low'' (to be discussed below) the users proceed with error correction and privacy amplification in the usual manner.

\end{protocol}
\subsection{Security}

In order to prove the security of Protocol~\ref{alg:prot1}, we will construct, in the usual way, an equivalent entanglement-based protocol~\cite{ben:bra:mer:92,lo:cha:99}. Proving security of this entanglement-based protocol will show the security of the prepare-and-measure version. This equivalence between entanglement-based and prepare-and-measure QKD protocols was first established by Bennett, Brassard and Mermin in~\cite{ben:bra:mer:92}. Since then, the relationship between the presence of entanglement (and specifically the ability of the involved parties to certify or distil entanglement) and the security of prepare-and-measure QKD protocols has been developed and thoroughly investigated~\cite{cur:lew:lut:04,cur:guh:lew:lut:05}. In this context, a quite common technique, when it comes to proving security of prepare-and-measure QKD protocols (such as the ones we present in this work), is to consider an equivalent entanglement-based protocol and prove its security~\cite{ben:bra:mer:92,lo:cha:99}. We should stress that it can also hold even in the case that the devices are not trusted (device-independent QKD), given that some specific assumptions about the devices are made~\cite{woo:pir:15}.\\

 For this entanglement-based version, for each one of the $N$ iterations, we make changes to steps (1) and (2), replacing them as follows:\vspace{5mm}

\textbf{New Step (1)}: Alice prepares the entangled state:
\[
\ket{\phi_0} = \frac{1}{\sqrt{2P}}\sum_{i=0}^{2P-1}\ket{i,i}_{AB}
\]
which lives in the $4P^2$ dimensional Hilbert space: $\left(\mathcal{H}_p\otimes\mathcal{H}_c\right)^{\otimes 2}$.  She sends the second half (the Bob portion of $\ket{\phi_0}$) to Bob while keeping the first half (the Alice portion) in her private lab. \vspace{5mm}

\textbf{New Step (2)}: Alice and Bob choose independently two random bits $w_A$ and $w_B$.  If $w_A = 0$, Alice will measure her half of the entangled state in the computational $Z$ basis; otherwise she will measure her half in the QW basis.  Similarly for Bob and $w_B$.  Let their measurement results in values be $i_A$ on Alice's side and $j_B$ on bob's side.

We now show the security of this entanglement-based version of the protocol.  In the following proof, we will initially make three assumptions:
\begin{enumerate}
  \item[\bf A1:] Alice and Bob only use those iterations where $w_A = w_B = 0$ for their raw key.

  \item[\bf A2:] Eve is restricted to collective attacks (those whereby she attacks each iteration of the protocol independently and identically, but is free to perform a joint measurement of her ancilla at any future time of her choosing).

  \item[\bf A3:] Eve is the party that actually prepares the states which Alice and Bob hold.
\end{enumerate}

Assumption A1 is made only to simplify the computation and may be discarded later (alternatively, one may bias the basis choice so that $w_A$ and $w_B$ are chosen to be $0$ with high probability, thus increasing the efficiency of the protocol as is done for instance for BB84 in~\cite{lo:cha:ard:05}).  Assumption A2 may be removed later using a de Finetti-type argument~\cite{ren:gis:kra:05,chr:kon:ren:09,ren:07} (in this paper, we are only concerned with the asymptotic scenario, so the key-rate expression we derive will not be degraded). Note that removing A2 gives us the security. Assumption A3 gives greater advantage to the adversary; if we prove security using A3, then the ``real-world'' case, where assumption A3 is not used, will certainly be just as secure, if not even more.

In light of A2 and A3, Alice, Bob, and Eve, after $N$ iterations of the protocol, hold a quantum state $\rho_{ABE}^{\otimes N}$, where $\rho_{ABE} \in \mathcal{H}_A\otimes\mathcal{H}_B\otimes\mathcal{H}_E$ with $\mathcal{H}_A \equiv \mathcal{H}_B \equiv \mathcal{H}_p\otimes\mathcal{H}_c$. We should note that, since we consider Eve to be an all-powerful adversary, there are no restrictions on the specific form of her Hilbert space $\mathcal{H}_E$. Following error correction and privacy amplification, Alice and Bob will hold a secret key of size $\ell(N)$.  Under the assumption of collective attacks (A2), we may use the Devetak-Winter key-rate expression~\cite{dev:win:05} to compute:
\[
r = \lim_{N\rightarrow \infty}\frac{\ell(N)}{N} = S(A|E) - H(A|B).
\]

Let $A_Z$ and $A_W$ be the random variables describing Alice's system, when she measures in the $Z$ or $QW$ basis, respectively.  Similarly, define $B_Z$ and $B_W$.  Under assumption A1, we are actually interested in the value:
\[
r = S(A_Z|E) - H(A_Z|B_Z).
\]

Computing $H(A_Z|B_Z)$ is trivial, given the observable probabilities:
\begin{equation}\label{eq:prot1:probZ}
p^Z_{i,j} = Pr(i_A=i \text{ and } j_B=j \text{ } | \text{ } w_A=w_B=0).
\end{equation}
The challenge is to determine a bound on the von Neumann entropy $S(A_Z|E)$.

To do so, we will use an uncertainty relation, proven in~\cite{ber:chr:col:ren:ren:10}, which states that for any density operator $\sigma_{ABE}$ acting on Hilbert space $\mathcal{H}_A\otimes\mathcal{H}_B\otimes\mathcal{H}_E$, if Alice and Bob make measurements using POVMs $\mathcal{M}_0 = \left\{M_x^{(0)}\right\}_x$ or $\mathcal{M}_1 = \left\{M_x^{(1)}\right\}_x$, then
\begin{equation}
S(A_0|E) + H(A_1|B) \ge \log\frac{1}{c},
\end{equation}
where
\begin{equation}
c = \max_{x,y}\left|\left| M_x^{(0)}M_y^{(1)} \right|\right|_\infty^2
\end{equation}
where we take $||\cdot||_\infty$ to be the operator norm and $A_i$ to be the random variable describing Alice's system after measuring $\mathcal{M}_i$ (we will later, similarly, define $B_i$). Assuming measurements $\mathcal{M}_0$ are used for key distillation, simple algebra, as discussed in~\cite{ber:chr:col:ren:ren:10}, yields the Devetak-Winter key-rate:
\begin{eqnarray*}
r = S(A_0|E) - H(A_0|B_0)& \ge &\log\frac{1}{c} - H(A_0|B_0) - H(A_1|B)\\[2mm]
&\ge & \log\frac{1}{c} - H(A_0|B_0) - H(A_1|B_1).
\end{eqnarray*}
The last inequality follows from the basic fact that measurements can only increase entropy.

In our case, we have $M_x^{(0)} = \ket{x}\bra{x}$ and $M_x^{(1)} = \ket{\psi_x}\bra{\psi_x}$ for $x \in \left\{0, 1, \cdots, 2P-1\right\}$.  Let $\ket{\psi_x} = \sum_{i=0}^{2P-1}\alpha_{x,i}\ket{i}$; then it is easy to see that for all $x,y$
\[
\left|\left|M_x^{(0)}M_y^{(1)}\right|\right|_\infty^2 = |\alpha_{y,x}|^2,
\]
and therefore
\begin{equation}\label{eq:prot1:cval}
c = \max_{x,y}|\alpha_{x,y}|^2,
\end{equation}
a quantity which depends exclusively on the choice of the QW parameters and not on the noise in the channel. Therefore, Alice and Bob should choose optimal $t, \theta_k$ and $P$ in order to minimize $c$ (thereby maximizing the key-rate equation).

As we show in the next section, this analysis is sufficient to derive good key-rate bounds.

\subsection{Evaluation}
As mentioned above, the value of $c$ depends solely on the QW parameters which are under Alice and Bob's control; therefore it is to their advantage to choose a QW which minimizes this value (i.e., such that, after evolving for $t$ steps, the probability of finding the walker at any particular position is small). 
 It is easy to see that, as $t \rightarrow \infty$, the values $|\alpha_{x,y}|$ do not converge to a steady state which is why, usually, one considers the time-averaged distribution when analyzing QWs on the cycle~\cite{aha:amb:kem:vaz:01,kem:03}. 
However, in our QKD protocol, we do not care what happens at large $t$; instead, we wish to find an optimal $t$ and one that is preferably not ``too large'' (the larger it is, the longer, in general, it might take Alice to prepare the state and Bob to reverse it). Notice that, while in the previous two-way protocol the choice of large $t$ increased the security of the protocol, here the role of $t$ is different. Larger $t$ does not mean that it is harder for Eve to distinguish.  It's just a parameter that Alice and Bob can tune to get optimal noise resistance, and in this context, small $t$ (even $t=1\ \text{or}\ 2$) can lead to secure systems, as well.  Different values of $t$ correspond to different noise tolerances and  our investigation is looking for optimal values of $t$ where ``optimal'' means the highest theoretical noise tolerance for a \textit{given} walk setting. If our system were to be implemented in practice, ``optimal'' would probably take a much different form and Alice and Bob would have to consider their device imperfections, etc.  Such an investigation, though very interesting, is outside the scope of this paper, but it presents a relevant direction of future work.

We begin by looking at various walk parameters and finding the minimal value of $c$ when $F=I_c$, the identity operator. Note that, on the circle, it makes sense only to consider odd $P$ as even $P$ would force the support of the probability amplitudes onto even or odd numbered nodes only thereby increasing the overall value of $|\alpha_{x,y}|$. We wrote a computer program to simulate the walk for time steps $t = 1, 2,\cdots , T_{\text{max}}$ (for user-specified value $T_{\text{max}}$) searching for the optimal value of $t$ (i.e., a value for $t$ whereby $c$ is minimum). For the evaluation we used a more general form of the coin rotation operator:

\begin{equation*}
R_c(\theta,\phi) = 	\left( 
				\begin{array}{cc}
					e^{i\phi} \cos(\theta) & e^{i\phi} \sin(\theta) \\
					-e^{-i\phi} \sin(\theta)& e^{-i\phi} \cos(\theta)
				\end{array}
			\right),
\end{equation*}

The results for $\theta = \pi/4, \phi = 0$, and for various $P$ are shown in Figure~\ref{fig:prot1:cvals}.
	
\begin{figure}[H]
  \centering
  \includegraphics[scale = 0.6]{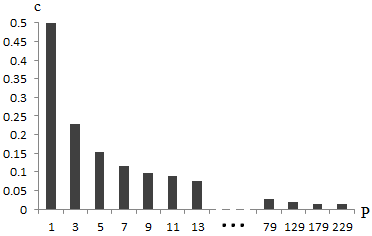}
\caption{Showing minimal value of $c$ found by our program for given position space dimension $P$ when $\theta = \pi / 4,\phi=0$ and $F=I_c$.  When $P \le 13$ we set $T_{\max} = 5000$; when $P \ge 79$ we set $T_{\max} = 50000$.  Note that, the smaller $c$ is, the better for Alice and Bob.  Note also that $P$ is the dimension of the position space, \emph{not} the number of qubits sent which would actually be $\lceil\log P\rceil+1$ (where the extra ``$+1$'' is due to the coin).}\label{fig:prot1:cvals}
\end{figure}

Now that we can find the optimal choice of QW parameters for particular values of $P$ and, more importantly for our work here, the resulting value of $c$. To this end, we have to compute our bound $r$ and determine for what noise levels we can have $r > 0$. In practice, one would observe values $p_{i,j}^Z$ and $p_{i,j}^W$ (see Equation~\eqref{eq:prot1:probZ} and define $p_{i,j}^W$ analogously) and use these to directly compute $H(A_Z|B_Z)$ and $H(A_W|B_W)$ as required by the key-rate equation.  For the purpose of illustration in this paper, however, we will evaluate our key-rate bound assuming a generalized Pauli channel as discussed in~\cite{bae:aci:07} (see, in particular, Section 7 of that source). This channel maps an input state $\rho$ to an output state $\mathcal{E}(\rho)$ defined as:
\begin{equation}\label{eq:prot1-channel}
\mathcal{E}(\rho) = \sum_{m=0}^{2P-1}\sum_{n=0}^{2P-1}p_{m,n}\;\mathcal{U}_{m,n}\;\rho\;\mathcal{U}_{m,n}^*,
\end{equation}
where
\begin{equation}
\mathcal{U}_{m,n} = \sum_{k=0}^{2P-1}e^{{\pi \cdot i \cdot  k \cdot n}/{P}}\ket{k+m}\bra{k}.
\end{equation}
That is, this channel $\mathcal{E}(\cdot)$ models an adversary's attack which induces phase and flip errors with probabilities denoted by $p_{m,n}$.  In our numerical computations to follow, we will use:
\begin{equation}\label{eq:prot1-channelpr}
p_{i,j} = \left\{
\begin{array}{ll}
	1-E_r & \text{ if } i = j = 0\\[4mm]
	\ds \frac{E_r}{(2P)^2-1} & \text{ otherwise}
\end{array}
\right. .
\end{equation}
It is clear that $\sum_{i,j}p_{i,j} = 1$.  Furthermore, when $E_r=0$, we have $\sum_ip_{i,i}^Z = \sum_ip_{i,i}^W = 1$ (i.e., there is no disturbance in the channel) while as $E_r$ increases, the disturbance also increases.

Finally, we define the total noise in the channel to be:
\[
Q = \sum_{a\ne b}p_{a,b}^Z = \sum_{a\ne b} Pr \big( A_Z=a \text{ and } B_Z = b \text{ } | \text{ } w_A = w_B = 0 \big).
\]
That is to say, $Q$ represents the quantum error rate (QER) of the channel.

The maximally tolerated QER, for those QWs analyzed in Figure \ref{fig:prot1:cvals}, and using the above described noise model, is shown in Figure \ref{fig:prot1:maxnoise1}.  Note that, when $P=1$ and $t=1$, we recover the BB84 limit of $11\%$ which is to be expected since, with these choice of parameters, we are essentially running the BB84 protocol.
\begin{figure}[H]
  \centering
  \includegraphics[scale=0.6]{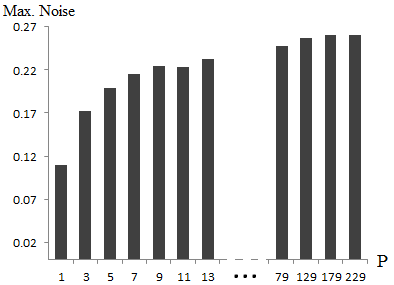}
\caption{Showing the maximally tolerated noise level for our protocol using parameters found in Figure \ref{fig:prot1:cvals} and using the quantum channel described by Equations~\eqref{eq:prot1-channel} and~ \eqref{eq:prot1-channelpr}. The lack of increase in noise tolerance from $P=9$ to $P=11$ (while other choices caused an increase) indicates that $T_{\max}$ was too low.  Note that, when $P = 1$, we recover the BB84 tolerance of $Q = 0.11$ as expected.  Also note that, when $P = 229$, the maximal tolerated noise is $Q = 0.261$.}\label{fig:prot1:maxnoise1}
\end{figure}
Observe in Figure \ref{fig:prot1:maxnoise1} that there is a lack of increase when $P=9$ and $P=11$; this indicates that our choice of $T_{\max} = 5000$ was too low.  Running our simulator again with $T_{\max} = 50000$ for these small $P$ values yields a maximally tolerated noise level shown in Figure \ref{fig:prot1:maxnoise2}.

\begin{figure}[H]
  \centering
  \includegraphics[scale=0.60]{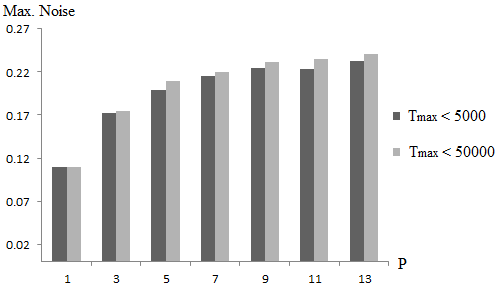}
\caption{Comparing the maximally tolerated noise when $t$ is allowed to be as large as $50000$ (light gray) or only $5000$ (dark gray); again when $F = I$ and $\phi = 0$.  In this case, when $P=13$ and $T_{\max} = 50000$, the maximal tolerated noise $Q$ is $Q = 0.241$.}\label{fig:prot1:maxnoise2}
\end{figure}

Finally, we re-run the simulator, using $T_{\max} = 5000$ and $T_{\max} = 50000$ for a different QW parameter of $\theta = \sqrt{2}\pi/4$ which, for these particular upper-bounds on $t$ yield a higher tolerated noise as shown in Figures \ref{fig:prot1:walk2} and \ref{fig:prot1:walk2-2}.  
We comment that, if $T_{\max}$ were larger, the two QWs may produce a QKD protocol with the same tolerated noise; however for these ``smaller'' bounds on $t$ the QW with parameter $\theta = \sqrt{2}\pi/4$ produces a more secure protocol than when $\theta = \pi/4$.  Since smaller $t$ implies a more efficient protocol, this is an advantage.  This opens two very interesting questions: first, do these QWs produce equivalent noise tolerances as $T_{\max}\rightarrow \infty$? Second, what other values of $\theta$ produce even more secure QKD protocols for small $T_{\max}$?  We comment that we also ran this numerical experiment for $\theta = \pi/5$ and $\theta = \pi/3$ but got worse noise tolerances.

\begin{figure}[H]
  \centering
  \includegraphics[scale = 0.60]{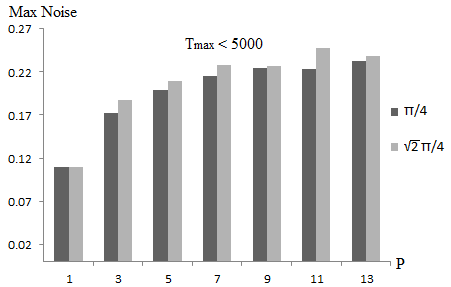}
\caption{Comparing the maximal tolerated noise levels of the QKD protocol when $\theta = \pi/4$ (dark gray) and $\theta = \sqrt{2}\pi/4$ (light gray).  In this chart, $T_{\max} = 5000$ which, observing the ``drop'' in tolerated noise when $P$ goes from 11 to 13, is too small.  See also Figure \ref{fig:prot1:walk2-2} for the same chart when $T_{\max} = 50000$.}\label{fig:prot1:walk2}
\end{figure}

\begin{figure}[H]
  \centering
  \includegraphics[scale = 0.60]{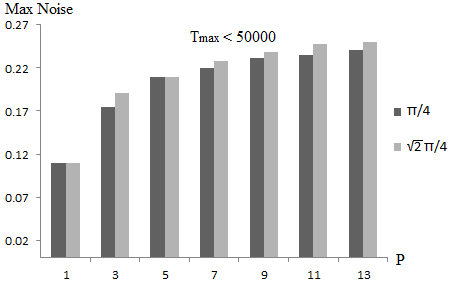}
\caption{Comparing the maximal tolerated noise levels of the QKD protocol when $\theta = \pi/4$ (dark gray) and $\theta = \sqrt{2}\pi/4$ (light gray).  In this chart, $T_{\max} = 50000$.  In all cases, the QW parameter $\theta = \sqrt{2}\pi/4$ produces a more secure QKD protocol for this upper-bound on $t$.  Note that, as $T_{\max} \rightarrow \infty$, they may produce equally secure protocols; this, as discussed in the text, is an open question.  In this case, when $P=13$ and $\theta = \sqrt{2}\pi/4$, the maximally tolerated noise is $0.25$ (compared to $0.241$ when $\theta = \pi/4$).}\label{fig:prot1:walk2-2}
\end{figure}

From the above it is clear that careful choice of the QW parameters is vital for producing a QKD protocol tolerant of high noise channels.  To investigate this further, we simulate the QW for all $\theta,\phi \in \{k\pi/10 \text{ } | \text{ } k = 0, 1, \cdots, 10\}$.  Furthermore, for each setting, we also consider the use of $F = I, F = X$, and $F = Y$, where:
\[
\begin{array}{clc}
X = \ds \frac{1}{\sqrt{2}}\left(
			\begin{array}{cc}
					1 & 1\\
					1 & -1
			\end{array}\right)
& &
Y = \ds \frac{1}{\sqrt{2}}\left(
			\begin{array}{cc}
					1 & 1\\
					i & -i
			\end{array}\right).
\end{array}
\]

For each setting, we find the optimal choice of time $t \le 5000$ which produces a minimal $c$.  We then take this value and determine the highest disturbance the resulting protocol can withstand.  The respective data is summarized in Table \ref{table:sim-results}.

\begin{center}
\begin{table}[H]
\centering
{\footnotesize
\begin{tabular}{c|ccccc|ccccc|ccccc|}
&\multicolumn{5}{|c|}{$F = I$} & \multicolumn{5}{|c|}{$F = X$} & \multicolumn{5}{|c|}{$F = Y$}\\
\cline{2-16}
P
& $\theta$ & $\phi$ & $t$ & $c$ & $Q_{\text{max}}$
& $\theta$ & $\phi$ & $t$ & $c$ & $Q_{\text{max}}$
& $\theta$ & $\phi$ & $t$ & $c$ & $Q_{\text{max}}$\\
\hline
3 & $0.4\pi$ & $0.2\pi$ & 4584 & $0.171$ &$0.220$
  & $0.8\pi$ & $0.8\pi$ & 3994 & $0.181$ & $0.211$
  & $0.7\pi$ & 0 & 1502 & $0.167$ &$0.225$\\
5 & $0.7\pi$ & $\pi$   & 4340 & $0.147$ &$0.205$
  & $0.9\pi$ & $0.5\pi$ & 3870 & $0.132$ & $0.22$
  & $0.3\pi$ & 0 & 3748 & $0.106$ & $0.253$\\
7 & $0.6\pi$ & $0.9\pi$ & 3946 & $0.088$ & $0.252$
  & $0.7\pi$ & $0.8\pi$ & 3391 & $0.099$ & $0.236$
  & $0.3\pi$ & $0.5\pi$ & 1275 & $0.083$ & $0.261$\\
9 & $0.6\pi$ & $0.6\pi$ & 1269 & $0.077$ & $0.252$
  & $0.9\pi$ & $0.7\pi$ & 3041 & $0.079$ & $0.250$
  & $0.3\pi$ & $0.5\pi$ & 965 & $0.069$ & $0.267$\\
11& $0.6\pi$ & $0.4\pi$ & 1221 & $0.069$ & $0.252$
  & $0.8\pi$ & $0.4\pi$ & 481 & $0.0724$ & $0.245$
  & $0.7\pi$ & $0.5\pi$ & 277 & $0.054$ & $0.284$
\end{tabular}
}
\caption{Showing the optimal choice of QW parameters to maximize the noise tolerance ($Q_{\text{max}}$) of the resulting protocol.  For this data, we searched for QWs with at most $T_{\text{max}} = 5000$ steps and with parameters $\theta,\phi \in \{k\pi/10 \text{ } | \text{ } k = 0, 1, \cdots, 10\}$.}\label{table:sim-results}
\end{table}
\end{center}

Note that, for some data points (e.g., when $P = 5$ and $F = I$) there is a drop in the maximum tolerated noise.  This is a consequence either of setting $T_{\text{max}}$ too small, or we need to simulate more QW parameters (as is done in Table \ref{table:sim-results2}). For example, when we set $T_{\max} = 50000$, for $P = 5$ and $F = I$, we get a maximum noise tolerance of $0.236$ when $t = 40847$.  Note also, that setting $F = Y$ achieves the best result for this test, $Q_{\text{max}}=0.284$.\\

In Table 2, we carried out the same experiment, however this time searching over QW parameters in the set $\theta, \phi \in \{k\pi/20 \text{ } | \text{ } k = 0, 1, \cdots, 20\}$. Again, the best result for this case is $Q_{\text{max}}=0.284$ and is achieved when considering $F = Y$.

\begin{center}
\begin{table}[h!]\includegraphics[scale=0.99]{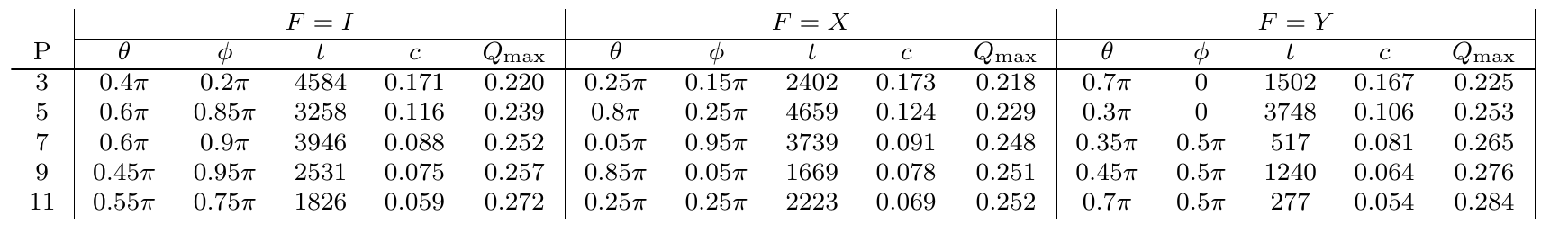}	
\caption{Showing the optimal choice of QW parameters to maximize the noise tolerance ($Q_{\text{max}}$) of the resulting protocol.  For this data, we searched for QWs with at most $T_{\text{max}} = 5000$ steps and with parameters $\theta,\phi \in \{k\pi/20 \text{ } | \text{ } k = 0, 1, \cdots, 20\}$.}
\label{table:sim-results2}
\end{table}\end{center}

As mentioned at the beginning of this Section, all the numerical results were obtained by simulating the evolution of the QW on a custom QW simulator that we wrote. However, we also verified the results through an alternative technique, namely by computing the probability amplitudes of the QW using the standard Fourier method (see, e.g.~\cite{nay:vis:00,ven:and:12}) of analyzing QWs. The results obtained by both methods agree with each other.

We would also like to stress that, while the use of high-dimensional walks is ``ideal'' from a noise-tolerance perspective, this is not required -- indeed, even with very small dimensions our protocol can tolerate the same level of noise as  BB84. In particular, notice in Figure~4, that already for the minimum $P=1$ we obtain the BB84 noise tolerance, which increases further for relatively  small dimensions $P=3, P=5$ and so on. We also simulated higher dimensions to investigate the theoretical properties of our systems for different walks.  However, such high dimensions are not required to get a robust protocol. Therefore, while the practical implementation of a high-dimensional protocol might be quite complex, our protocol is robust even for relatively small dimensions, thus the experimental challenges in a potential practical implementation could be reduced. Moreover, we should emphasise that recently there has been a remarkable progress in the experimental generation and manipulation of high-dimensional entangled states. In particular,  in~\cite{wang:etal:17,jha:etal:08,martin:etal:17,dada:etal:11,krenn:etal:14} the experimental generation of bipartite high-dimensional entangled states has been demonstrated, while in~\cite{bavaresco:etal:17,lavery:etal:12,mirhosseini:etal:13,islam:etal:17} there have been proposed experimental techniques for performing measurements in such states. Furthermore, the generation of multi-partite high-dimensional entangled states has been reported~\cite{erhard:etal:17,malik:etal:16,hiesmayr:etal:16}, as well as the generation of big arrays of such states~\cite{krenn:etal:16}.

Finally, we note that the protocol's security is not compromized by considering the existence or not of quantum memories. It is sufficient to consider the prepare-and-measure form of the protocol. Eve needs a quantum memory to perform her attack, as she needs to save her ancillary system throughout the execution of the protocol. In the contrary, the secure key distribution between Alice and Bob does not require any quantum memory. Therefore, if Eve does not have a quantum memory she cannot attack, while if even she has one and attacks, Alice and bob can defend against it and securely share a key at the end. Notice, that even if we consider the entanglement-based version of the protocol, again the security is independent of any quantum memory requirements, as Eve for her attack needs a more stable quantum memory than Alice and Bob need to defend against it and securely distil the key.

%%%%%%%%%%%%%%%%%%%%%%%%%%%%%%%%%%%%%%%%%%%%%%%%%%%%%%%%%%%%%%%%%%%%%%%%%%%%%%%%%%%%%%%%%%%%%%%%%%%%%%%%%%%%%%%%
%%%%%%%%%%%%%%%%%%%%%%%%%%%%%%%%%%%%%%%%%%     Semi-Quantum      %%%%%%%%%%%%%%%%%%%%%%%%%%%%%%%%%%%%%%%%%%%%%%%
%%%%%%%%%%%%%%%%%%%%%%%%%%%%%%%%%%%%%%%%%%%%%%%%%%%%%%%%%%%%%%%%%%%%%%%%%%%%%%%%%%%%%%%%%%%%%%%%%%%%%%%%%%%%%%%%

\section{Semi-Quantum Key-Distribution Scheme}
\label{sec:semi-qkdscheme}

In this section we will present a SQKD protocol based on QWs. Semi-quantum protocols can be seen as practical instances of QKD, since they involve less quantum hardware, as one of the parties is completely classical. Also, they are interesting from a theoretical point of view, as they can be treated as a measure of the ``quantumness'' needed for a protocol to surpass the security of its classical counterparts. It is assumed that one of the parties, e.g. Alice, is fully quantum, i.e., possesses quantum equipment, while the other party, e.g. Bob, is restricted to classical operations only, and for that reason is usually called the {\em classical party}. 

Semi-quantum protocols rely on a two-way quantum channel allowing a quantum state to travel from Alice to Bob, then back to Alice.  When first introduced by Boyer {\em et al.} in \cite{boy:ken:mor:07}, these classical operations involved Bob either measuring the incoming qubit in the $Z = \{\ket{0}, \ket{1}\}$ basis, or reflecting the incoming qubit, bouncing it back to Alice undisturbed.  For our purposes, we extend this definition of ``classical'' operations to operate with higher dimensional systems.  As we do not want to restrict ourselves necessarily to qubit encodings (and thus, dimensions that are powers of two), we will say that Bob, on receipt of an $D$ dimensional quantum state $\ket{\psi},$ may choose to do one of two operations:
\begin{enumerate}
  \item Measure and Resend: Bob may subject the $D$-dimensional quantum state to a measurement in the computational basis spanned by states: $\{\ket{0}, \ket{1}, \cdots,$ $ \ket{D-1}\}$.  He will then prepare a new $D$-dimensional quantum state in this same computational basis based on the result of his measurement.  Namely, if he observes $\ket{r}$ for $r \in \{0, 1, \cdots, D-1\}$, he will send to Alice the quantum state $\ket{r}$.
  \item Reflect: Bob may ignore the incoming $D$-dimensional quantum state and reflect it back to Alice.  In this case he learns nothing about its state.
\end{enumerate}

With these restrictions on the part of the classical user defined, we now depict and describe our protocol:

 % \begin{widetext}
    \begin{center}
    \begin{figure}[H]
    \centering
    \includegraphics[width=0.8\textwidth]{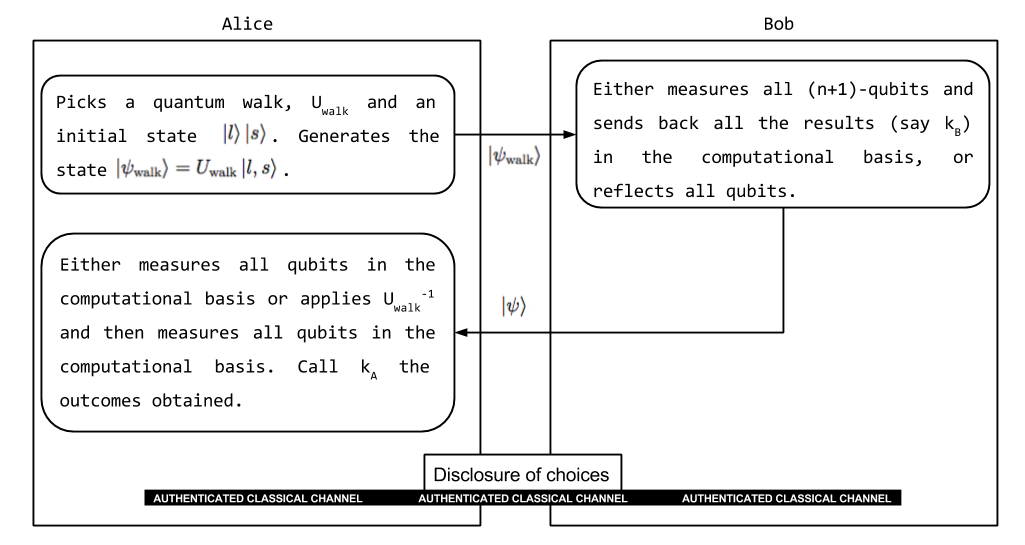}
    \caption{Description of the basic steps of Protocol~\ref{prot:semi-q}.}
    \label{fig:semi_QKD}
    \end{figure}
    \end{center}
 %   \end{widetext}

\begin{protocol} Semi-quantum key-distribution scheme\
\label{prot:semi-q} 
\begin{description}
\item[\hspace{6mm}{\bf Inputs for the protocol} ]\
\begin{itemize}
	\item $\ket{l,s}$, the initial state of the QW, where $l \in \mathcal L=\{ 0, \dots, P-1 \} $ is the initial position of the walker, and $s \in \mathcal S=\{R,L\}$ gives the initial coin state.
	\item $U_{\textnormal{walk}} = (U_k)^t \in \mathcal Q$, the evolution of the QW, where $k \in \mathcal K  =\{1,2,\ldots,K\}$ is the choice of a single step unitary $U_k$, and $t \in \mathcal{T} = \{ T_0, \dots, T_{max} \}$ is the number of steps of the QW. Thus, $\mathcal Q$ is the set of all possible QWs. Note that $\mathcal Q$  is publicly known.
\end{itemize}

\vspace{3mm}
\item[\hspace{6mm}{\bf Quantum state Generation}]\
\begin{itemize}
 \item Alice chooses uniformly at random $l \in \mathcal L=\{0,1,\cdots, P-1\}$ and $s\in\mathcal S=\{R,L\}$.  She also chooses a random QW operator $U_{\textnormal{walk}} \in \mathcal{Q}$ according to a publicly known distribution (e.g., uniform).  She then prepares the following state:

$$\ket{\psi_{\textnormal{walk}}} = U_{\textnormal{walk}}\ket{l,s}.$$
\item Alice sends this state to Bob.
\end{itemize}

\vspace{3mm}
 \item[\hspace{6mm}{ \bf Classical operations by Bob} ]\

Bob chooses either to measure-and-resend the quantum state in the computational basis $\{\ket{0}, \ket{1}, \cdots, \ket{2P-1}\}$ (note that, in this protocol as well, he measures both the position and coin in order to obtain the key, thus his measurement, and subsequent preparation, is of dimension $2P$); or he will reflect the quantum state back to Alice.

\vspace{3mm}

  \item[\hspace{6mm}{\bf Alice's final step} ]\
  
  Alice chooses one of the following two options:
  \begin{itemize}
  	\item She measures the returning quantum state in the computational basis and saves the result as $\kappa_A$. 
  	\item She first applies the inverse QW, $U_{\textnormal{walk}}^{-1}$, and then measures in the computational basis. Note that, in the absence of noise, if Bob reflects, her measurement outcome should be $\ket{l,s}$.
   \end{itemize}

\vspace{3mm}
  \item[\hspace{6mm}{\bf Disclosure}]\
   
  Alice discloses her choice of operation and Bob discloses his choice either to measure and resend or reflect.
  
  \vspace{3mm}
   \item[\hspace{6mm}{\bf Iterations} ]\
   
  The above process is repeated $N$ times. 
  
  \vspace{3mm}
   \item[\hspace{6mm}{\bf Results} ]\
  \begin{itemize}
  	\item Every time bob measures and resends and Alice measures in the computational basis, the parties add $1+\log P$ bits  to their final raw key. After $N$ iterations, the raw key will consist of $N(1 + \log P)/4$ bits, on average. Considering that Alice and Bob choose independently, uniformly and at random their actions at each iteration, then -- on average -- $N/4$ iterations will contribute to the raw key.
  	\item Every time Bob reflects and Alice measures after applying the inverse QW, the outcome of her measurement $(l_m,s_m)$ should be what she initially used to generate the QW state (i.e., it should be that $l=l_m$ and $s=s_m$). These iterations, together with some randomly chosen iterations of the first type (where Bob measures and resends), are used for error detection. 
  	\item The other iterations are discarded.
  \end{itemize}
\end{description}
\end{protocol}

\subsection{Proof of robustness}

As with the first protocol we proposed in this paper, the reliance on a two-way quantum channel greatly complicates the security analysis.  It was only recently that several SQKD protocols were proven secure \cite{krawec2015security,krawec2016security,krawec2016quantum,zha:qiu:mat:16}. However, the proof techniques developed in those works assumed qubit-level systems.  In our case, not only must we contend with a two-way channel, but also with the fact that the quantum states traveling between Alice and Bob are of dimensions higher than $2$.  This leads to significant challenges in any security analysis.

Therefore, to analyze the security of this protocol, we will prove that it is \emph{robust} as defined in ~\cite{boy:ken:mor:07,boy:gel:ken:mor:09}. That is, for any attack which Eve may perform which causes her to gain information on the raw key, this attack must necessarily lead to a disturbance in the channel which can be detected with non-zero probability by Alice and Bob.  Typically proving robustness is a first-step in the security analysis of semi-quantum cryptographic protocols.

\begin{theorem}\label{thm:sqkd-robust}
If $I \in \mathcal{Q}$ (where $I$ is the identity operator on the joint $2P$ dimensional system) and if, for every $(l,s), (l',s') \in \{0,1,\cdots, P-1\}\times\{R,L\}$ there exists a $U_{\textnormal{walk}} \in \mathcal{Q}$ and initial state $\ket{l_0,s_0}$ (all possibly depending on the choice of $(l,s)$ and $(l',s')$) such that $\braket{l,s|U_{\textnormal{walk}}|{l_0,s_0}} \ne 0$ and $\braket{l',s'|U_{\textnormal{walk}}|l_0,s_0} \ne 0$, then the SQKD protocol based on QWs is robust.
\end{theorem}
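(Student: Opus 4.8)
The plan is to follow the standard robustness template for semi-quantum protocols: model Eve's most general two-pass collective attack, impose that none of the error-detection rounds is disturbed, and deduce that her ancilla decouples from the raw key. Concretely, let Eve hold an ancilla initialized to $\ket{E}$, apply a unitary $U_F$ to (travelling system)$\otimes$(ancilla) on the Alice$\to$Bob pass, and a unitary $U_R$ on the Bob$\to$Alice pass, with Bob's operation acting on the travelling system alone and chosen independently of Eve's attack. Robustness then amounts to showing that if every error-detection round passes with certainty, then after each key-generating round Eve's ancilla is in a fixed state independent of the key value.

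First I would use the hypothesis $I\in\mathcal{Q}$ to let Alice send bare computational-basis states $\ket{i}$, $i\in\{0,\dots,2P-1\}$. Writing $U_F(\ket{i}\otimes\ket{E}) = \sum_j \ket{j}\otimes\ket{e_{i,j}}$, the measure-and-resend test rounds on such states require Bob to measure $i$ with certainty, forcing $\ket{e_{i,j}}=0$ for $j\ne i$; hence the forward attack is ``diagonal'', $U_F(\ket{i}\otimes\ket{E}) = \ket{i}\otimes\ket{f_i}$ with each $\ket{f_i}$ a unit vector. Next, the reflect test on the same basis states (Bob reflects, Alice applies $U_{\text{walk}}^{-1}=I$ and must recover $i$) forces $U_R(\ket{i}\otimes\ket{f_i}) = \ket{i}\otimes\ket{g_i}$ for some unit vectors $\ket{g_i}$. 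Since $U_F$ and $U_R$ are fixed maps, these forms hold simultaneously across all round types, because Eve cannot know in advance whether Bob will reflect or measure.

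The crux is to show that all the $\ket{g_i}$ coincide, and this is exactly where condition 2 (the spreading of the QW) enters. For a reflectable QW state $\ket{\phi} = U_{\text{walk}}\ket{l_0,s_0} = \sum_i c_i\ket{i}$, linearity gives $U_R U_F(\ket{\phi}\otimes\ket{E}) = \sum_i c_i\ket{i}\otimes\ket{g_i}$; the reflect test on $\ket{\phi}$ demands that, after $U_{\text{walk}}^{-1}$, Alice recovers $(l_0,s_0)$ with certainty, which is possible only if this state equals $\ket{\phi}\otimes\ket{g}$ for a single Eve state $\ket{g}$. Projecting onto each $\ket{i}$ yields $\ket{g_i}=\ket{g}$ whenever $c_i\ne 0$. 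Condition 2 guarantees that any two indices $(l,s)$ and $(l',s')$ both appear with nonzero amplitude in at least one such reflectable $\ket{\phi}$, so a chaining argument identifies all $\ket{g_i}$ with a common state $\ket{g}$.

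Finally I would cash this out on a key-generating round. Alice sends $\ket{\psi_{\text{walk}}}=\sum_i c_i\ket{i}$; the forward attack produces $\sum_i c_i\ket{i}\otimes\ket{f_i}$; Bob's measurement yields outcome $j$ (the raw-key value) and collapses the joint state to $\ket{j}\otimes\ket{f_j}$, after which he resends $\ket{j}$; the reverse attack then gives $U_R(\ket{j}\otimes\ket{f_j}) = \ket{j}\otimes\ket{g}$ by the previous two steps. Alice's measurement returns $j$, consistent with the raw key, and crucially Eve's ancilla is left in $\ket{g}$, which does not depend on $j$; hence no undetectable attack can correlate Eve with the key, establishing robustness. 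I expect the main obstacle to be the chaining step: one must argue carefully that condition 2 supplies enough overlapping supports to force a single common $\ket{g}$, and that Bob's intermediate measurement --- which destroys the coherence the reflect analysis relied upon --- nonetheless leaves Eve decoupled precisely because $U_F$ and $U_R$ were already pinned down on the computational basis.
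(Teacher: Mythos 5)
Your proposal is correct and follows essentially the same route as the paper's proof: use the rounds where Alice chooses $I \in \mathcal{Q}$ (with Bob measuring-and-resending or reflecting) to pin both $U_F$ and $U_R$ down to a ``diagonal'' form on computational-basis inputs, then use reflect rounds on walk states with spread-out amplitudes --- whose existence for every pair of indices is exactly the second hypothesis --- to force all of Eve's post-reverse ancilla states to coincide, so her memory decouples from the raw key, with the argument repeated inductively over iterations. The only cosmetic difference is that the paper establishes the equality of the ancilla states by expanding the success probability and forcing $\text{Re}(\braket{f_i|f_j}) = 1$, whereas you invoke the equivalent fact that a probability-one measurement outcome forces the joint state into product form.
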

\begin{proof}

We will assume, similarly to~\cite{zou:qiu:li:wu:li:09,kra:14}, that Alice sends each (in our case $2P$-dimensional) quantum state, only after she receives one from Bob (excepting, of course, the first iteration).  In this case, Eve's most general attack consists of a collection of unitary operators $\left\{(U_F^{(i)}, U_R^{(i)})\right\}_{i=1}^N$ where, on iteration $i$ of the protocol, she applies $U_F^{(i)}$ in the forward channel (as the quantum state travels from Alice to Bob) and $U_R^{(i)}$ in the reverse channel.  These operators act on the $2P$-dimensional quantum state and Eve's private quantum memory.  We make no assumptions about how these operators are chosen -- for instance, Eve may choose them ``on the fly''; that is, she may choose operator $U_F^{(2)}$ after attacking with $U_F^{(1)}$.

Consider the first iteration $i=1$.  We assume, without loss of generality, that Eve's quantum memory is cleared to some pure ``zero'' state, denoted by $\ket{\chi}$, known to her.

In the remainder of this proof, we will treat the position space and the coin space as a single space $\Sigma$ of dimension $2P$.

We may describe the action of $U_F^{(1)}$ on basis states as follows
\[
U_F^{(1)}\ket{i, \chi} = \sum_{j=0}^{2P-1}\ket{j, e_i^j},
\]
where $\ket{e_i^j}$ are arbitrary states in Eve's ancillary system. These states are not necessarily normalized nor orthogonal; the unitarity of $U_F^{(1)}$ imposes some restrictions on them which we will use later.

With non-zero probability, this iteration may be used for error detection.  It is also possible that Alice chose to use $I \in \mathcal{Q}$ in this iteration and, thus, she sends the quantum state $\ket{\sigma}$ to Bob, for $\sigma \in \Sigma$. Furthermore, Bob chooses to measure and resend with non-zero probability. Therefore, to avoid detection, it must be that $\ket{e_i^j}\equiv 0$ for all $i \ne j$, and the unitarity of  $U_F^{(1)}$ yields $\braket{e_i^i|e_i^i} = 1$ for all $i$.  Thus:
\[
U_F^{(1)} \ket{i,\chi} = \ket{i,e_i^i}, \forall i=0,1,\cdots,2P-1.
\]
Now, consider $U_R^{(1)}$, the attack applied in the reverse channel.  We may write its action as follows:
\[
U_R^{(1)}\ket{i, e_i^i} = \sum_{w=0}^{2P-1}\ket{w, e_{i,i}^w}.
\]
The same argument as before applies: in particular, with non-zero probability Alice and Bob will use this iteration to check for errors, and so it must be that $\ket{e_{i,i}^w} \equiv 0 $ for $i \ne w$.  Thus
\[
U_R^{(1)}\ket{i, e_i^i} = \ket{i, e_{i,i}^i} = \ket{i, f_i}, \forall i = 0,1,\cdots, 2P-1,
\]
where we defined $\ket{f_i} \equiv \ket{e_{i,i}^i}$ for ease of notation.

Now, assume that Alice chooses a QW operator $\walkop \in \mathcal{Q}$, with $\walkop \ne I$.  Let $\ket{\sigma}$ be the initial state she prepares ($\sigma$ chosen at random from $\Sigma$).  In this case, the quantum state she sends to Bob may be written as:
\[
\walkop\ket{\sigma} = \ket{\psi_\sigma} = \sum_{i=0}^{2P-1}\alpha_i\ket{i}.
\]
Assume that $\walkop$ is chosen so that at least two of the $\alpha_i$'s are non-zero (such QWs exist by hypothesis).  If Bob reflects, the qubit state arriving at Alice's lab, after Eve's attack on both channels, is
\begin{equation}\label{eq:wk:walkresult}
U_R^{(1)}U_F^{(1)}(\walkop\otimes I_E)\ket{\sigma,\chi} = \sum_i \alpha_i\ket{i,f_i},
\end{equation}
where $I_E$ is the identity operator on Eve's ancilla.

Alice will subsequently apply the inverse QW operator and measure the resulting state, expecting to find $\ket{\sigma}$.  This is equivalent to her measuring in the QW basis $\{\ket{\psi_0}, \ket{\psi_1}, \cdots, \ket{\psi_{2P-1}}\}$, where $\ket{\psi_i} = \walkop\ket{i}$, and expecting to observe $\ket{\psi_\sigma}$.  In this QW basis, we clearly have
\[
\ket{i} = \sum_{j=0}^{2P-1}\braket{\psi_j|i}\ket{\psi_j},
\]
from which, we may write Equation~\eqref{eq:wk:walkresult} as:
\begin{eqnarray}
\sum_{i=0}^{2P-1}\alpha_i \left( \sum_{j=0}^{2P-1}\braket{\psi_j|i}\ket{\psi_j}\right) \otimes \ket{f_i}\\\\
=\sum_{j=0}^{2P-1}\ket{\psi_j}\otimes\left(\sum_{i=0}^{2P-1}\alpha_i\braket{\psi_j|i}\ket{f_i}\right).
\end{eqnarray}
Let $p$ be the probability that this iteration does not result in an error -- i.e., the probability that Alice measures $\ket{\psi_\sigma}$. From the above equation:
\[
p = \left| \sum_{i=0}^{2P-1}\alpha_i\braket{\psi_\sigma|i}\ket{f_i} \right|^2.
\]
Noticing that $\braket{\psi_\sigma|i} = \alpha_i^*$ (since $\ket{\psi_\sigma} = \sum_i\alpha_i\ket{i}$), and also $\braket{f_i|f_i} = 1$ (due to the unitarity of $U_R^{(1)}$), we find:
\[
p = \left|\sum_i|\alpha_i|^2\ket{f_i}\right|^2 = \sum_i|\alpha_i|^4 + 2\sum_{i>j\ge 0}|\alpha_i|^2|\alpha_j|^2\text{Re}(\braket{f_i|f_j}).
\]

When $\ket{f_i} \equiv \ket{f_j}=\ket{F},$ for all $i,j$, the above quantity attains its maximum of $p=1$. In this case, after Eve's attack, the system described by Equation~\eqref{eq:wk:walkresult} is $\sum_i\alpha_i\ket{i}\otimes\ket{F} = \ket{\psi_\sigma}\otimes\ket{F}$.  Due to the Cauchy-Schwarz inequality $\text{Re}(\braket{f_i|f_j}) \le 1$.  If, however, one or more of the $\text{Re}(\braket{f_i|f_j}) < 1$ for any of the $(\ket{f_i}, \ket{f_j})$ pairs which appear in the expression above (i.e., for those where $\alpha_i$ and $\alpha_j$ are non-zero), it is obvious that $p < 1$ and so Eve would be detected.

Therefore, to avoid detection, it must be that $\text{Re}(\braket{f_i|f_j}) = 1$ for all $i,j$ where $\alpha_i$ and $\alpha_j$ are non-zero, implying $\ket{f_i} \equiv \ket{f_j}$.  Indeed, if we write $\ket{f_j} = x\ket{f_i} + y\ket{\zeta}$, where $\braket{f_i|\zeta} = 0$, then $\text{Re}(\braket{f_i|f_j}) = 1 = \text{Re} (x)$.  Of course $|x|^2 + |y|^2 = 1$ (since $\braket{f_j|f_j} = 1$) and so:
\begin{equation*}
|x|^2+|y|^2 = 1\\
\Rightarrow \text{Re}^2 x + \text{Im}^2 x + |y|^2 = 1\\
\Rightarrow \text{Im}^2x + |y|^2 = 0.
\end{equation*}
This implies both $\text{Im}(x) = 0$ and $y=0$.  Since $\text{Re}(x) = 1$, we conclude $x=1$ and so $\ket{f_i} = \ket{f_j}$.

Since Alice could have chosen any QW in $\mathcal{Q}$, all possible $(i,j)$ pairs are covered (i.e., at least one QW in $\mathcal{Q}$ is guaranteed to produce a state where $\alpha_i$ and $\alpha_j$ are non-zero) and since Eve does not know which QW was chosen, it must be that $\ket{f_i} \equiv \ket{f_j} \equiv \ket{F}$ for all $i,j$.

Thus, after the first iteration, to avoid detection, it must be that the state of Eve's quantum memory is in the state $\ket{F}$, independently of Alice's and Bob's raw key and operations.  Thus, Eve is not able to extract any information during the first iteration. Furthermore, since she is fully aware of the state of her quantum memory in this case (i.e., she knows the state $\ket{F}$), the above arguments may be repeated inductively for the remaining iterations of the protocol, leading to the conclusion that the protocol is robust.
\end{proof}

The above proof of robustness placed certain requirements on the set of QW $\mathcal{Q}$, but can such a set even exist?  We show that, at least for all odd $P$, such a set may be easily constructed.

\begin{lemma}
If $P$ is odd, then there exists a set of QWs $\mathcal{Q}$ which satisfy the requirements of Theorem \ref{thm:sqkd-robust}.
\end{lemma}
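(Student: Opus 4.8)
The plan is to construct an explicit set $\mathcal{Q}$ that meets the two hypotheses of Theorem~\ref{thm:sqkd-robust}: that $I \in \mathcal{Q}$, and that for every pair of target basis states $(l,s),(l',s')$ there is a walk in $\mathcal{Q}$ and an initial state whose evolved amplitude is simultaneously nonzero on both targets. The first requirement is trivial: just include the identity, which corresponds to choosing $t=0$ (or, if we insist on a genuine walk operator, we may also add $I$ as a distinguished element of $\mathcal{Q}$, since the theorem permits this). The real content is exhibiting, for an arbitrary ordered pair of computational-basis states, a single QW state that ``spreads'' its amplitude over both of the requested positions simultaneously.

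The central idea I would use is that a quantum walk on an odd cycle \emph{spreads} the walker's probability amplitude across many positions, and in particular a nondegenerate walk (e.g.\ with coin angle $\theta$ not a multiple of $\pi/2$, such as $\theta = \pi/4$) has full support after sufficiently many steps. Concretely, I would first argue that there is a fixed choice of coin rotation $R_c(\theta_k)$ and a number of steps $t$ such that, starting from \emph{some} initial state $\ket{l_0,s_0}$, the evolved state $\ket{\psi} = U_k^t\ket{l_0,s_0} = \sum_i \alpha_i \ket{i}$ has \emph{all} amplitudes $\alpha_i$ nonzero. Since this single state then has $\braket{l,s|U_k^t|l_0,s_0} \ne 0$ and $\braket{l',s'|U_k^t|l_0,s_0} \ne 0$ for \emph{every} pair $(l,s),(l',s')$ at once, one well-chosen walk in $\mathcal{Q}$ discharges the second hypothesis uniformly over all pairs --- the flexibility of letting the walk depend on the pair is not even needed. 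Here is where oddness of $P$ enters: as already noted in the Evaluation section, on an even cycle the support is confined to the even- or odd-indexed nodes (a parity/bipartiteness obstruction), so full support is impossible; for odd $P$ this parity obstruction disappears and full support becomes attainable.

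The main obstacle, and the step requiring genuine care, is rigorously establishing that the amplitudes are \emph{all strictly nonzero} for some explicit finite $t$, rather than merely generically or asymptotically nonzero. I would approach this via the Fourier (momentum-space) analysis of walks on the cycle referenced in the paper (see~\cite{nay:vis:00,ven:and:12}): diagonalizing $U_k$ by the discrete Fourier transform over the cyclic position space yields closed-form expressions $\alpha_i(t) = \frac{1}{P}\sum_{\kappa} e^{2\pi i \kappa(i-l_0)/P}\,(\text{coin-dependent eigenphase factor})$, and I would show these are nonzero. A clean way to sidestep delicate cancellation arguments is to invoke that each $\alpha_i(t)$ is (up to normalization) a finite exponential sum whose vanishing would impose a nontrivial algebraic relation among the eigenphases; choosing $\theta$ so that these eigenphases are incommensurable (irrational multiples of $\pi$) forces all $\alpha_i(t) \ne 0$ for suitable $t$, which is also consistent with the paper's numerical finding that irrational-type parameters such as $\theta=\sqrt{2}\,\pi/4$ behave well. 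Alternatively, and perhaps more safely for a short proof, I would avoid claiming \emph{full} support and instead verify the weaker pairwise statement the theorem literally requires: for each fixed pair $(l,s),(l',s')$, it suffices to find one walk spreading onto those two specific nodes, and a short step-count argument (tracking how the two coin branches reach a given node after a small number of steps on the odd cycle) already guarantees two prescribed amplitudes are nonzero. I expect the bookkeeping of reconciling the position/coin indexing with the $2P$-dimensional $\Sigma$-labeling, and handling the boundary identification $P \equiv 0$, to be the only other fiddly part.
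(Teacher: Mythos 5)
Your proposal has a genuine gap in its primary route, and your fallback route --- which is essentially the paper's actual argument --- is left as a one-sentence sketch that omits exactly the steps where the real work (and the oddness of $P$) lies. The full-support strategy is not a proof as stated: to make it rigorous you would need (i) to prove that the eigenphases of $U_k$ on the $P$-cycle are rationally independent for a concrete coin angle, which is a hard number-theoretic question (numerical good behavior of $\theta=\sqrt{2}\pi/4$ is not evidence of independence, and for $\theta=\pi/4$ the phases satisfy $\cos\omega_\kappa = \cos(2\pi\kappa/P)/\sqrt{2}$, about which nothing relevant is known a priori); and (ii) even granting independence, nonvanishing of all $2P$ amplitudes (note: indexed by position \emph{and} coin, which your Fourier expression suppresses) at some specific finite integer $t$ requires an equidistribution/density argument over the torus of eigenphases together with a proof that the relevant trigonometric sums are not identically zero --- none of which is ``simple algebra.'' You correctly flag this as the delicate step, but flagging it does not discharge it.

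The paper avoids all of this with an elementary pairwise construction whose ingredients are missing from your fallback sentence. Given $(l,s),(l',s')$, oddness of $P$ is used not (as you suggest) via the bipartiteness obstruction, but to invert $2$ modulo $P$: there exist $q$ and $q_0$ with $q-q_0\equiv l$ and $q+q_0\equiv l' \pmod P$, i.e.\ an exact midpoint. The walker is started adjacent to $q$ so that after one step it sits at $q$ in the symmetric coin superposition $\frac{1}{\sqrt 2}\ket{q}(\ket{R}+\ket{L})$; after $q_0$ further Hadamard-walk steps the two \emph{extremal ballistic branches} land on $l$ and $l'$ with amplitudes that cannot cancel, because each extremal position is reached by exactly one path (amplitude $2^{-q_0/2}\neq 0$). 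This single-path non-cancellation is the substitute for your incommensurability argument, and it needs the common midpoint so that one value of $t$ serves both targets simultaneously --- the step your sketch of ``tracking the two coin branches'' glosses over. Finally, the extremal branches arrive with \emph{definite} coin states, so to cover arbitrary $s,s'$ the paper applies one extra coin rotation at the end (using the walk $\left[(I_p\otimes R_c)\cdot S\right]^{t+1}$ with the shift before the coin), spreading each of the two positions onto both coin values; your proposal never addresses the coin components of the targets at all. With these three ingredients supplied, the pairwise route closes; without them, neither of your routes constitutes a proof.
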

\begin{proof}
Let $(l,s),(l',s') \in \{0,1,\cdots, P-1\}\times\{R,L\}$.  We construct a QW $U_{l,s,l',s'}$ and an initial state $\ket{l_0,s_0}$ such that $\braket{l,s|U_{l,s,l',s'}|l_0,s_0} \ne 0$ and $\braket{l',s'|U_{l,s,l',s'}|l_0,s_0} \ne 0$.

Since $P$ is odd, there exits a position index $q \in \{0,1,\cdots,P-1\}$ and a value $q_0 \in \mathbb{Z}$ such that $|q_0| < P$, $q-q_0 \equiv l \text{ (mod) } P$, and $q+q_0\equiv l' \text{ (mod) } P$.  We assume that $q_0 \ge 0$; if $q_0 < 0$ the result is symmetric by simply ``flipping'' $l$ with $l'$ (in which case $q_0$ becomes non-negative).

The shift operator $S$ for our QW is simply the usual
\[
S = \sum_{i=0}^{P-1}\ket{i-1}\bra{i}\otimes \ket{R}\bra{R} + \sum_{i=0}^{P-1}\ket{i+1}\bra{i} \otimes \ket{L}\bra{L},
\]
where all arithmetic, of course, is done modulo $P$.  Our coin operator will simply be the Hadamard coin:
\[
R_c = \frac{1}{\sqrt{2}}\left(
\begin{array}{cc}
1&1\\
1&-1
\end{array}\right).
\]
We claim the desired operator is $U_{l,s,l',s'} = \left[ ( I_p\otimes R_c)\cdot S\right]^{t+1}$. (Note that the shift operator is applied before the coin in this case to simplify the construction) Now, consider the initial state $\ket{q+1,R}$.  After the first step of the QW (i.e., after applying $(I_p\otimes R_c)\cdot S$), the QW evolves to the state $\frac{1}{\sqrt{2}}\ket{q}(\ket{R}+\ket{L})$.  It is not difficult to see that, after $t$ additional steps with this QW, \emph{but before the final application of $I_p\otimes R_c$ on the $(t+1)$-th step}, the quantum state evolves to:
\[
\alpha\ket{l,R} + \beta\ket{l',L} + \ket{\phi},
\]
where $|\alpha| \ne 0, |\beta| \ne 0$, and $\ket{\phi}$ is a non-normalized state orthogonal to both $\ket{l,R}$ and $\ket{l',L}$.  Finally, after the last $I_p\otimes R_c$, the state becomes
\begin{eqnarray*}
U_{l,s,l',s'}\ket{q+1,R} = \frac{1}{\sqrt{2}}(\alpha\ket{l,R} + \alpha\ket{l,L} \\+ \beta\ket{l',R} - \beta\ket{l',L}) + \ket{\phi'},
\end{eqnarray*}
with $\ket{\phi'}$ being a state orthogonal to $\ket{l,R}, \ket{l,L}, \ket{l',R},$ and $\ket{l',L}$, thus yielding the desired state.  Taking $\mathcal{Q} = \bigcup_{l,s,l',s'}\left\{U_{l,s,l',s'}\right\} \cup \{I\}$ proves the result.
\end{proof}

Finally, we should notice that the robustness of this SQKD protocol is independent of the existence or absence of  quantum memories. In fact, Eve's attack requires a stable quantum memory, in which she keeps her ancillary system during the execution of the protocol. On the other hand, Alice does not need any quantum memory in order to share the key with Bob at the end, and Bob is, of course, restricted to classical operations. Therefore, without a quantum memory Eve cannot even conduct the attack, whereas even if she has access to a quantum memory, she is not able to extract any useful information about the key without being detected by Alice and Bob.

\section{Practical attacks}
\label{sec:prac_att}

While our paper presents theoretical cryptographic proposals, and a detailed analysis of practical attacks is out of its scope, it is worthy presenting a short discussion of possible attacks and countermeasures for the case of optical implementations. The term practical attacks refers to attacks during which Eve is taking advantage of possible loopholes in the implementation of the protocols, i.e., the fact that the setups used for the implementation of the protocols are not perfect, can seriously compromise the security of the key. Such attacks have been thoroughly investigated in the literature and several countermeasures have been proposed in different setups and scenarios. For an overview of the recent progress and current status of this area of QKD, see the following detailed reviews~\cite{sca:bsc:csr:dus:lut:pee:09,jai:sti:kha:els:mar:leu:16,dia:lo:qi:yua:16,bed:arr:lin:17,dix:etal:17}.

One of the most studied of such attacks is the photon number splitting attack (PNS), which is based on the fact that there are no perfect single-photon sources~\cite{bra:lut:mor:san:00,lut:00,lut:jah:02}. Instead, the current sources emit in general multi-photon pulses, whose photon number statistics are described by a Poisson distribution. Eve, who is considered all powerful and bounded only by the laws of physics, can thus, by placing herself in front of Alice, detect genuine multi-photon pulses, extract one photon from each, and send the rest to Bob through a lossless channel, while blocking single-photon pulses. Due to the fact that the quantum channel connecting Alice and Bob has losses exponential in the channel length, there exists a maximal distance, known to Eve, below which Bob is not able to spot Eve's interference. By storing the extracted photons in her quantum memory, Eve can measure them in the correct basis upon the classical communication between Alice and Bob, during which they publicly reveal their choices of preparation/measurement bases. Since all the photons of the same pulse are in the same state, Eve thus has the key shared by Alice and Bob.

The standard technique used to defend against a PNS attack is by introducing the so-called decoy states~\cite{hwa:03}. In addition to the signal states, from which the key is obtained, Alice sends coherent states $\ket{e^{i\theta}|\alpha|}$, with phase $\theta$ chosen uniformly at random, and the variable intensity $I\propto |\alpha|$. Note that such decoy pulses are to Eve indistinguishable from the signal ones. Thus, Alice and Bob can subsequently detect Eve's interference (extracting single photons from multi-photon pulses) by comparing the yields of signal and decoy states (given the channel loss $\ell$, the yield $y$ is defined as $y=1-\ell$~\cite{hwa:03}). For more details, see~\cite{lo:ma:che:05}, as well as subsequent improvements and modifications~\cite{wan:05a,wan:05b,har:ett:hug:nor:05,ma:qi:zha:lo:05,wan:wan:bjo:kar:07,ros:pet:har:ric:dal:tya:mcc:nam:bae:had:hug:nor:09,luc:dyn:fro:yua:shi:15}. This method, developed for standard one-way QKD schemes, can be straightforwardly applied to our second proposal, which is a one-way protocol as well. It can also be applied to our first and third two-way proposals. Indeed, in our first proposal, as Bob does not perform any measurement, it is Alice who performs the yield estimation upon receiving back the pulses. The same can be done by Alice alone for the photons reflected by Bob in our third proposal, in which in addition the yield check could be done for the pulses measured by Bob. As mentioned above, the details of the techniques depend on particular implementations and are beyond the scope of our theoretical study.

While the PNS attack is applicable to most of the protocols that use imperfect photon sources, the above description of its particular implementation is given on the example of a standard QKD {\em one-way} scheme. Thus, it has to be re-examined when applied to different protocols. The crucial feature of the standard QKD protocol is the exchange of classical information between Alice and Bob, which allows Eve to extract the key exchanged. Therefore, since such exchange is present in our second and third protocol, the above described PNS attack is applicable to those protocols as well. Note though that in the case of the third, {\em two-way} protocol, Eve can possibly extract information only upon intercepting the pulses re-sent from Bob to Alice. Indeed, in the third protocol the key is obtained from the cases in which Bob and, upon receiving them back, Alice too, perform measurements in the computational basis, thus sharing the same set of bits. Extracting photons from the pulse before it came to bob, and consequently before his measurement, gives Eve no information about the key.

Nevertheless, our first, {\em two-way} QKD protocol, is considerably different from the standard QKD ones, as Alice and Bob reveal {\em no classical information} regarding their quantum operations (they only exchange information regarding the cases used for verification procedure, which do not contribute to the key generation). Thus, Eve's task is more difficult than in the case of standard QKD protocols. What Eve can do is to extract {\em two} photons from each pulse, one on the way from Alice to Bob, and another on the way back to Alice, and compare their states, $\ket\psi$ and $\ket{\psi(r)}$, in the attempt to learn the key $r$. Note that, even in the noiseless scenario, the described comparison does not have perfect efficiency, unlike the standard application of the PNS attack in which Eve learns the key with certainty. Moreover, in the case of our protocol, Eve can attack only three or more photon pulses, thus decreasing the efficiency of her attack with respect to the standard one, which makes use of more probable two-photon pulses as well. For example, for the commonly used order of the mean photons per pulse, $\mu = 0.2$, the probability for emitting three or more photons is $p(n\geq 3) \approx 0.001$, while the probability to emit exactly two photons (the ``deficit'' with respect to the standard PNS attack) is of the order of magnitude higher, $p(n=2) \approx 0.016$, where $n$ is the number of photons per pulse emitted. 

Finally, we would like to note that, although in practical attacks Eve is assumed to be all powerful, exceeding the current technological equipment used by everyday users, not all practical attacks are based on the same level of subtle equipment. In the case of the PNS attack, Eve should be able to perform photon non-demolition number measurements, a task beyond any current and (at least mid-term) foreseeable technology.

Nevertheless, there exist other practical protocols that do not requite such sophisticated technology. Below, we briefly analyse three such kinds of attacks, extensively studied in the literature: the Trojan horse, the detector blinding and the time-shift attacks.

The Trojan horse attack is one of the first attacks ever considered and since then it has been thoroughly investigated and continuously developed in different contexts. In a nutshell, Trojan horse attacks benefits from the imperfections in the quantum channel between Alice and Bob that allows for Eve's interference by modulating Alice's pulses, sending them to Bob and analysing the reflected/backscattered signal~\cite{vak:mak:hje:01,luc:cho:war:dyn:yua:shi:15,saj:min:jai:mak:17}. The first such attack benefitted from the detector imperfections, by collecting the light emitted upon the detection of the photons~\cite{kur:zar:may:wei:01}. To counter such attacks, introducing simple optical isolators suffice in one-way protocols, while for two-way protocols one needs to introduce additional monitoring detectors~\cite{gis:fas:kra:zbi:rib:06}.

Furthermore, we would like to briefly discuss two more attacks, namely the detector blinding and the time-shift attacks, which are both considered in the broader context of intercept and resend with faked states attacks~\cite{liz:lop:lop:16}.  In general, during an intercept and resend with faked states attack, Eve is not trying to extract information about the key from the original states that the legitimate parties exchange. Instead, she generates and sends to them classical or quantum light pulses, which are tailored in a way that she can control their measurement outcomes, while she is blocking the original states. At the end of such an attack, Eve and the legitimate parties share the same key, without Alice and Bob being able to detect Eve's interference. In both the aforementioned attacks, Eve is taking advantage of loopholes in the performance and efficiency of the detectors of the legitimate parties. 

First, we consider the detector blinding attacks to standard one-way QKD protocols~\cite{lyd:wie:wit:els:ska:mak:10,wie:lyd:wit:els:ska:mar:mak:leu:11}. Eve first intercepts the state that Alice sends to Bob and measures it in one of the two possible bases, that she randomly chooses. Then, she sends to one of Bob's detectors a bright light pulse according to her measurement outcome. Note that the intensity of the bright light pulse is just a bit above the detector's threshold. If Bob chooses to measure in the same basis as Eve, all the light will be directed to one of his detectors, due to the interference. The detector, which is now operating in the linear instead of the Geiger mode (avalanche photon diode), will click and Eve will now share the same key bit with Bob. If Bob chooses to measure in the complementary basis, the light will be divided in two components and its intensity will not be enough to trigger neither of the the detectors, therefore Bob will not get a click and this iteration will be discarded. Subsequently, Alice and Bob will keep for the key the bits for which Alice's preparation basis and Bob's measuring basis agree. During their classical communication, Eve will learn exactly which are these bits, therefore she will share the same key, while her interference remains unnoticed. 

This attack is ineffective for the case of our first, two-way, protocol, in which only Alice is performing the measurement on the pulses received back from Bob. Note that her performing the inverse QW, followed by the measurement in the computational basis, is equivalent to measuring in an {\em unknown} to Eve (ensured by our Holevo argument mentioned in Section ~3.1, and presented with details in~\cite{vla:rod:mat:pau:sou:15}), ``rotated'' basis, with respect to the computational one. Therefore, virtually all Eve attempts to perform the detector blinding attack would result in no detection events for Alice. Moreover, even the (rare) detections, being uncorrelated with the initial state sent by Alice, would not pass the verification procedure described in Section ~3.1.1, as well as the analogous checking rounds of the third, two-way semi-quantum protocol (when Bob reflects the pulses back to Alice and she performs the inverse walk and measures in the computational basis).

Regarding our second, one-way key distribution protocol, Bob's action is similar to the one in the standard protocols: he measures in one of the two publicly known bases. To counter such an attack, Bob can apply one of the known counter-measures proposed and analysed in~\cite{liu:lin:kur:ska:mak:ger:14,sti:14,ele:ozh:kur:gol:mak:15,wan:wan:qin:wei:zha:16,lee:par:woo:par:kim:han:moo:16}. Nevertheless, we would like to note again that our QW protocol is more complex than the standard ones based on few (typically four) quantum states, and thus its implementations might possibly invoke new challenges, a topic worth a separate study. 

Time-shift attacks take advantage of the different timing responses of the detectors. Assuming Eve knows the timings of during which each detector is (in)sensitive, allows her to, similarly as in the previous case of the detector blinding, enforce the particular outcomes of Bob's/Alice's measurements. Analogously as in the case of detector blinding attacks, such strategy cannot pass the two-way verification procedures and checking rounds of our first and  third protocols. In the case of our one-way protocol, one could employ similar methods to the ones proposed in~\cite{mak:ani:ska:06,lam:kur:07,zha:fun:qi:che:lo:08,qi:fun:lo:ma:07}, in order to defend against a time-shift attack. %Again, we should stress that the possible concrete implementations of our one-way quantum walk QKD protocol would require appropriate modifications of the aforementioned countermeasures.  

\section {Conclusions and future work}
In this paper we employed, for the first time to our knowledge, QWs in order to design and analyze new secure QKD protocols. 

Perhaps the most important contribution of this work is that it introduces the exciting possibility of using QWs for QKD purposes and may spur new research in both cryptography, and also in QWs. By themselves, QWs exhibit many fascinating properties which, as we've shown here, translate to interesting properties of QKD protocols. 

Besides the theoretically interesting intersection of two unique and fascinating fields of quantum information science, there are also potential practical benefits in pursuing this investigation. Recently, it has been shown in numerous studies~\cite{bec:tit:00,cer:bou:kar:gis:02,bru:chr:eke:eng:ber:kas:mac:03,nik:alb:05,she:sca:10,cha:15,nik:ran:alb:06,cha:05} that when Alice and Bob exchange higher dimensional systems (qudits instead of qubits), the respective QKD protocols (in fact both kinds of prepare-and-measure and entanglement-based) \textit{tolerate more noise} than the 2-dimensional ones. This is also confirmed by our results for the high-dimensional states generated by means of QWs, where the noise tolerance increases for higher dimensions. While such systems hold interesting theoretical properties, it could also be that, in a future quantum infrastructure, the generation of these QW states would be easier compared to other higher dimensional systems. Indeed, producing such states may not need the high entanglement of many qubits -- instead they could be generated through the evolution of a single-qubit walker on, for instance, a multi-node quantum network.

In what follows, we point out some directions of future work. First, it would be interesting to perform a more detailed study on the two verification procedures presented in Sections~\ref{sec:stdver} and~\ref{sec:bellver} and compare them with respect to various attack strategies. Moreover, one could analyze the relation between the two for concrete cases of Eve's cheating strategies in the presence of noise.

We have presented our protocols in the case that the parties use lossless channels, which is the common first step, when it comes to proving security of new theoretical QKD protocols~\cite{sca:bsc:csr:dus:lut:pee:09,lu:fun:ma:cai:11,ren:gis:kra:05,bea:luc:man:ren:13,cha:15}. A relevant future direction of work would be to conduct a detailed analysis of the effects of lossy channels on the number of secret key bits that the parties share in the end, and determine the relevant measures, such as distance vs key-rate, etc.

In Section~\ref{sec:oneway}, we proved the security of the one-way protocol, but still some improvements could be done. In particular, one could find an analytical solution for the optimal choice of QW parameters or, alternatively, given particular QW parameters, to find an analytical solution for the value of $c$ from Equation~\eqref{eq:prot1:cval}. Another interesting question would be to understand the maximally tolerated noise as the dimension of the QW $P \rightarrow \infty$ (and also $T_{\max} \rightarrow \infty$). For instance, in~\cite{cha:15}, a high-dimensional QKD protocol was introduced (not using QWs, but simpler states), which could suffer a disturbance up to $50\%$ as the dimension of the state sent by Alice approached infinity (for a novel protocol for multiparty quantum key management that implements $d$-level measurements, see~\cite{xu:che:dou:yan:li:15}). Can we construct a QW-QKD protocol with similar features?  Does our protocol approach this disturbance level for high $P$?
Moreover, studying and employing other QW models (perhaps the memory-based QWs described and analyzed in, e.g.,~\cite{get:10,get:jar:mis:14,roh:bre:gil:13,kra:15,aha:amb:kem:vaz:01,bru:car:amb:03}) or QWs on different graphs, would be interesting -- our key-rate equation would generalize to these cases, the only change would be the value of $c$; perhaps different QW models, or different graphs, would produce more optimal values, thus increasing the key-rate.

The SQKD protocol we proposed lacks of a proof of security beyond robustness. As we already mentioned in Section~\ref{sec:semi-qkdscheme}, this proof is technically very challenging due to the high-dimensional QW states and the use of a two-way channel. Hence, computing analytically the key-rate is extremely hard and moreover, the numerical simulation is equally challenging, even for low-dimensional walks. Nevertheless, we believe that obtaining the key rate is not impossible, and we expect that this analysis will yield quite high error-tolerance. A first step towards this direction would be to try to reduce this protocol to a simpler one (for instance, the one in~\cite{boy:ken:mor:07}, for which there is a security proof~\cite{krawec2015security}) and prove that it is at least as secure. This reduction does not seem to be a straightforward task and requires a thorough analysis.

Finally, concrete proposals for possible practical implementations of our protocols deserve a separate investigation, and we leave them as future work. While such practical implementations can be non-trivial, we believe that our theoretical proposals are quite promising. In particular, we are optimistic that the experimental advances in the field of the generation and manipulation of high-dimensional states mentioned above, combined with the on-going progress in the field of QW realisations, will permit the implementation of our protocols in higher dimensions. Furthermore, as we argued before, high-dimensional QW states (at least in the case of our second protocol) are not necessary in order to obtain a secure and robust protocol. Along these lines, our proposals could be seen as alternatives to the already existing QKD protocols, which can be implemented with the current technology. Perhaps, the most relevant setups for QKD purposes would be the optical implementations of QWs~\cite{bro:etal:10,san:etal:12,goyal:etal:13,goyal:etal:15,knight:etal:03,schreiber:etal:10,rol:sor:06,schreiber:etal:12,cardano:etal:15}. For a detailed and complete review of QW experimental realisations, we refer to the  book~\cite{man:wan:13}.

\section*{Acknowledgements}
WK would like to acknowledge the hospitality of SQIG--Security and Quantum Information Group in IT -- Instituto de Telecomunica\c c\~oes, in Lisbon, during his visit while working on this project.
CV acknowledges the support from DP-PMI and FCT (Portugal) through the grant PD/BD/ 52652/2014.
CV, PM, NP and AS acknowledge the support of SQIG-Security and Quantum Information Group. PM, NP and AS also acknowledge the support from UID/EEA/50008/2013. 
NP acknowledges the IT project QbigD funded by FCT PEst-OE/
EEI/LA0008/2013. 
PM and AS acknowledges the FCT project Confident PTDC/EEI-CTP/4503/2014. 
A.S. also acknowledges the support of LaSIGE Research Unit, ref. UID/CEC/00408/2013. 

\bibliographystyle{unsrt}
\bibliography{QWQKD}

\begin{thebibliography}{100}

\bibitem{ben:bra:84}
C.~Bennett and G.~Brassard.
\newblock {Quantum Cryptography: Public Key Distribution and Coin Tossing}.
\newblock In {\em Proceedings of the IEEE International Conference on
  Computers, Systems and Signal Processing}, pages 175--179, New York, 1984.
  IEEE Press.

\bibitem{eke:91}
A.~K. Ekert.
\newblock Quantum cryptography based on bell's theorem.
\newblock {\em Phys. Rev. Lett.}, 67:661--663, Aug 1991.

\bibitem{sca:bec:cer:dui:dus:lut:pee:09}
V.~Scarani, H.~Bechmann-Pasquinucci, N.~J. Cerf, M.~Du\ifmmode~\check{s}\else
  \v{s}\fi{}ek, N.~L\"utkenhaus, and M.~Peev.
\newblock The security of practical quantum key distribution.
\newblock {\em Rev. Mod. Phys.}, 81:1301--1350, Sep 2009.

\bibitem{urs:etal:07}
R.~Ursin, F.~Tiefenbacher, T.~Schmitt-Manderbach, H.~Weier, T.~Scheidl,
  M.~Lindenthal, B.~Blauensteiner, T.~Jennewein, J.~Perdigues, P.~Trojek,
  B.~Omer, M.~Furst, M.~Meyenburg, J.~Rarity, Z.~Sodnik, C.~Barbieri,
  H.~Weinfurter, and A.~Zeilinger.
\newblock Entanglement-based quantum communication over 144km.
\newblock {\em Nat Phys}, 3:481--486, 2007.

\bibitem{tak:sas:tam:koa:15}
H.~Takesue, T.~Sasaki, K.~Tamaki, and M.~Koashi.
\newblock Experimental quantum key distribution without monitoring signal
  disturbance.
\newblock {\em Nature Photonics}, 9:827--831, 2015.

\bibitem{pug:kai:bou:jin:sul:agn:ani:mak:cho:hig:jen:17}
C.~J. Pugh, S.~Kaiser, J.-P. Bourgoin, J.~Jin, N.~Sultana, S.~Agne,
  E.~Anisimova, V.~Makarov, E.~Choi, B.~L. Higgins, and T.~Jennewein.
\newblock Airborne demonstration of a quantum key distribution receiver
  payload.
\newblock {\em Quantum Science and Technology}, 2:024009, Jun 2017.

\bibitem{idquantique}
ID~Quantique.
\newblock Quantum key distribution record.
\newblock http://www.idquantique.com/record-breaking-qkd, 2017.

\bibitem{yin:etal:17}
J.~Yin, Y.~Cao, Y.-H. Li, S.-K. Liao, L.~Zhang, J.-G. Ren, W.-Q. Cai, W.-Y.
  Liu, B.~Li, H.~Dai, G.-B. Li, Q.-M. Lu, Y.-H. Gong, Y.~Xu, S.-L. Li, F.-Z.
  Li, Y.-Y. Yin, Z.-Q. Jiang, M.~Li, J.-J. Jia, G.~Ren, D.~He, Y.-L. Zhou,
  X.-X. Zhang, N.~Wang, X.~Chang, Z.-C. Zhu, N.-L. Liu, Y.-A. Chen, C.-Y. Lu,
  R.~Shu, C.-Z. Peng, J.-Y. Wang, and J.-W. Pan.
\newblock Satellite-based entanglement distribution over 1200 kilometers.
\newblock {\em Science}, 356:1140--1144, 2017.

\bibitem{aha:dav:zag:93}
Y.~Aharonov, L.~Davidovich, and N.~Zagury.
\newblock Quantum random walks.
\newblock {\em Phys. Rev. A}, 48:1687--1690, Aug 1993.

\bibitem{far:gut:98}
E.~Farhi and S.~Gutmann.
\newblock Quantum computation and decision trees.
\newblock {\em Phys. Rev. A}, 58:915--928, Aug 1998.

\bibitem{chi:etal:03}
A.~Childs, R.~Cleve, E.~Deotto, E.~Farhi, S.~Gutmann, and D.~Spielman.
\newblock Exponential algorithmic speedup by a quantum walk.
\newblock In {\em Proceedings of the Thirty-fifth Annual ACM Symposium on
  Theory of Computing}, STOC '03, pages 59--68, New York, NY, USA, 2003. ACM.

\bibitem{amb:03}
A.~Ambainis.
\newblock Quantum walks and their algorithmic applications.
\newblock {\em International Journal of Quantum Information}, 01(04):507--518,
  2003.

\bibitem{san:08}
M.~Santha.
\newblock Quantum walk based search algorithms.
\newblock In M.~Agrawal, D.~Du, Z.~Duan, and A.~Li, editors, {\em Theory and
  Applications of Models of Computation}, volume 4978 of {\em Lecture Notes in
  Computer Science}, pages 31--46. Springer Berlin Heidelberg, 2008.

\bibitem{por:13}
R.~Portugal.
\newblock {\em Quantum walks and search algorithms.}
\newblock Quantum Science and Technology. New York, NY: Springer, 2013.

\bibitem{chi:09}
A.~M. Childs.
\newblock Universal computation by quantum walk.
\newblock {\em Phys. Rev. Lett.}, 102:180501, May 2009.

\bibitem{lov:coo:eve:tre:ken:10}
N.~B. Lovett, S.~Cooper, M.~Everitt, M.~Trevers, and V.~Kendon.
\newblock Universal quantum computation using the discrete-time quantum walk.
\newblock {\em Phys. Rev. A}, 81:042330, Apr 2010.

\bibitem{chi:gos:web:13}
A.~M. Childs, D.~Gosset, and Z.~Webb.
\newblock Universal computation by multiparticle quantum walk.
\newblock {\em Science}, 339(6121):791--794, 2013.

\bibitem{roh:fit:gil:15}
P.~P. Rohde, J.~F. Fitzsimons, and A.~Gilchrist.
\newblock Quantum walks with encrypted data.
\newblock {\em Physical review letters}, 109(15):150501, 2012.

\bibitem{vla:rod:mat:pau:sou:15}
C.~Vlachou, J.~Rodrigues, P.~Mateus, N.~Paunkovi\'c, and A.~Souto.
\newblock Quantum walk public-key cryptographic system.
\newblock {\em International Journal of Quantum Information}, 13(07):1550050,
  2015.

\bibitem{nik:08}
G.~Nikolopoulos.
\newblock Applications of single-qubit rotations in quantum public-key
  cryptography.
\newblock {\em Phys. Rev. A}, 77:032348, Mar 2008.

\bibitem{sey:nik:alb:12}
U.~Seyfarth, G.M. Nikolopoulos, and G.~Alber.
\newblock Symmetries and security of a quantum-public-key encryption based on
  single-qubit rotations.
\newblock {\em Physical Review A}, 85(2):022342, 2012.

\bibitem{boy:ken:mor:07}
M.~Boyer, D.~Kenigsberg, and T.~Mor.
\newblock Quantum key distribution with classical bob.
\newblock {\em Phys. Rev. Lett.}, 99:140501, Oct 2007.

\bibitem{boy:gel:ken:mor:09}
M.~Boyer, R.~Gelles, D.~Kenigsberg, and T.~Mor.
\newblock Semiquantum key distribution.
\newblock {\em Phys. Rev. A}, 79:032341, Mar 2009.

\bibitem{yao:86}
A.~C. Yao.
\newblock How to generate and exchange secrets.
\newblock In {\em Foundations of Computer Science, 1986., 27th Annual Symposium
  on}, pages 162--167. IEEE, 1986.

\bibitem{ben:bra:mer:92}
C.~H. Bennett, G.~Brassard, and N.~D. Mermin.
\newblock Quantum cryptography without bell's theorem.
\newblock {\em Phys. Rev. Lett.}, 68:557--559, Feb 1992.

\bibitem{lo:cha:99}
H.~K. Lo and H.~Chau.
\newblock Unconditional security of quantum key distribution over arbitrarily
  long distances.
\newblock {\em Science}, 283(5410):2050--2056, 1999.

\bibitem{cur:lew:lut:04}
M.~Curty, M.~Lewenstein, and N.~Lutkenhaus.
\newblock Entanglement as a precondition for secure quantum key distribution.
\newblock {\em Phys. Rev. Lett.}, 92:217903, May 2004.

\bibitem{cur:guh:lew:lut:05}
M.~Curty, O.~Guhne, M.~Lewenstein, and N.~Lutkenhaus.
\newblock Detecting two-party quantum correlations in quantum-key-distribution
  protocols.
\newblock {\em Phys. Rev. A}, 71:022306, Feb 2005.

\bibitem{woo:pir:15}
E.~Woodhead and S.~Pironio.
\newblock Secrecy in prepare-and-measure clauser-horne-shimony-holt tests with
  a qubit bound.
\newblock {\em Phys. Rev. Lett.}, 115:150501, Oct 2015.

\bibitem{lo:cha:ard:05}
H.-K. Lo, H.~F. Chau, and M.~Ardehali.
\newblock Efficient quantum key distribution scheme and a proof of its
  unconditional security.
\newblock {\em J. Cryptol.}, 18(2):133--165, April 2005.

\bibitem{ren:gis:kra:05}
R.~Renner, N.~Gisin, and B.~Kraus.
\newblock Information-theoretic security proof for quantum-key-distribution
  protocols.
\newblock {\em Phys. Rev. A}, 72:012332, Jul 2005.

\bibitem{chr:kon:ren:09}
M.~Christandl, R.~K\"onig, and R.~Renner.
\newblock Postselection technique for quantum channels with applications to
  quantum cryptography.
\newblock {\em Phys. Rev. Lett.}, 102:020504, Jan 2009.

\bibitem{ren:07}
R.~Renner.
\newblock Symmetry of large physical systems implies independence of
  subsystems.
\newblock {\em Nat. Phys.}, 3:645--649, Jul 2007.

\bibitem{dev:win:05}
I.~Devetak and A.~Winter.
\newblock Distillation of secret key and entanglement from quantum states.
\newblock 461(2053):207--235, 2005.

\bibitem{ber:chr:col:ren:ren:10}
M.~Berta, M.~Christandl, R.~Colbeck, J.~M. Renes, and R.~Renner.
\newblock The uncertainty principle in the presence of quantum memory.
\newblock {\em Nat. Phys.}, 6:659--662, Jul 2010.

\bibitem{aha:amb:kem:vaz:01}
D.~Aharonov, A.~Ambainis, J.~Kempe, and U.~Vazirani.
\newblock Quantum walks on graphs.
\newblock {\em Proceedings of ACM Symposium on Theory of Computation (STOC
  '01)}, pages 50--59, July 2001.

\bibitem{kem:03}
J.~Kempe.
\newblock Quantum random walks: An introductory overview.
\newblock {\em Contemporary Physics}, 44(4):307--327, 2003.

\bibitem{bae:aci:07}
J.~Bae and A.~Ac\'{\i}n.
\newblock Key distillation from quantum channels using two-way communication
  protocols.
\newblock {\em Phys. Rev. A}, 75:012334, Jan 2007.

\bibitem{nay:vis:00}
N.~Ashwin and V.~Ashvin.
\newblock Quantum walk on the line.
\newblock Technical report, 2000.

\bibitem{ven:and:12}
Salvador~El{\'i}as Venegas-Andraca.
\newblock Quantum walks: a comprehensive review.
\newblock {\em Quantum Information Processing}, 11(5):1015--1106, Oct 2012.

\bibitem{wang:etal:17}
F.~Wang, M.~Erhard, A.~Babazadeh, M.~Malik, M.~Krenn, and A.~Zeilinger.
\newblock Generation of the complete four-dimensional bell basis.
\newblock {\em Optica}, 4:1462--1467, 2017.

\bibitem{jha:etal:08}
A.~K. Jha, M.~Malik, and R.~W. Boyd.
\newblock Exploring energy-time entanglement using geometric phase.
\newblock {\em Phys. Rev. Lett.}, 101:180405, Oct 2008.

\bibitem{martin:etal:17}
A.~Martin, T.~Guerreiro, A.~Tiranov, S.~Designolle, F.~Frowis, N.~Brunner,
  M.~Huber, and N.~Gisin.
\newblock Quantifying photonic high-dimensional entanglement.
\newblock {\em Phys. Rev. Lett.}, 118:110501, Mar 2017.

\bibitem{dada:etal:11}
A.~C. Dada, J.~Leach, G.~S. Buller, M.~J. Padgett, and E.~Andersson.
\newblock Experimental high-dimensional two-photon entanglement and violations
  of generalized bell inequalities.
\newblock {\em Nature Physics}, 7:677--680, May 2011.

\bibitem{krenn:etal:14}
M.~Krenn, M.~Huber, R.~Fickler, R.~Lapkiewicz, S.~Ramelow, and A.~Zeilinger.
\newblock Generation and confirmation of a (100 x 100)-dimensional entangled
  quantum system.
\newblock {\em Proceedings of the National Academy of Science of the United
  States of America}, 111:6243--6247, 2014.

\bibitem{bavaresco:etal:17}
J.~Bavaresco, N.~Herrera~Valencia, C.~Klöckl, M.~Pivoluska, N.~Friis,
  M.~Malik, and M.~Huber.
\newblock Two measurements are sufficient for certifying high-dimensional
  entanglement.
\newblock {\em arXiv preprint}, page 1709.07344, 2017.

\bibitem{lavery:etal:12}
M.~P.~J. Lavery, D.~J. Robertson, G.~C.~G. Berkhout, G.~D. Love, M.~J. Padgett,
  and J.~Courtial.
\newblock Refractive elements for the measurement of the orbital angular
  momentum of a single photon.
\newblock {\em Optics Express}, 20:2110--2115, 2012.

\bibitem{mirhosseini:etal:13}
M.~Mirhosseini, M.~Malik, Z.~Shi, and R.~W. Boyd.
\newblock Efficient separation of the orbital angular momentum eigenstates of
  light.
\newblock {\em Nature Communications}, 4:2781, 2013.

\bibitem{islam:etal:17}
N.~T. Islam, C.~Cahall, A.~Aragoneses, A.~Lezama, J.~Kim, and D.~J. Gauthier.
\newblock Robust and stable delay interferometers with application to
  $d$-dimensional time-frequency quantum key distribution.
\newblock {\em Phys. Rev. Applied}, 7:044010, Apr 2017.

\bibitem{erhard:etal:17}
m.~Erhard, M.~Malik, M.~Krenn, and A.~Zeilinger.
\newblock Experimental ghz entanglement beyond qubits.
\newblock {\em arXiv preprint}, page 1708.03881, 2017.

\bibitem{malik:etal:16}
M.~Malik, M.~Erhard, M.~Huber, M.~Krenn, Fickler R., and A.~Zeilinger.
\newblock Multi-photon entanglement in high dimensions.
\newblock {\em Nature Photonics}, 10:248--252, Feb 2016.

\bibitem{hiesmayr:etal:16}
B.~C. Hiesmayr, M.~J.~A. de~Dood, and W.~L\"offler.
\newblock Observation of four-photon orbital angular momentum entanglement.
\newblock {\em Phys. Rev. Lett.}, 116:073601, Feb 2016.

\bibitem{krenn:etal:16}
Mario Krenn, Mehul Malik, Robert Fickler, Radek Lapkiewicz, and Anton
  Zeilinger.
\newblock Automated search for new quantum experiments.
\newblock {\em Phys. Rev. Lett.}, 116:090405, Mar 2016.

\bibitem{krawec2015security}
W.~O. Krawec.
\newblock Security proof of a semi-quantum key distribution protocol.
\newblock In {\em Information Theory (ISIT), 2015 IEEE International Symposium
  on}, pages 686--690. IEEE, 2015.

\bibitem{krawec2016security}
W.~O. Krawec.
\newblock Security of a semi-quantum protocol where reflections contribute to
  the secret key.
\newblock {\em Quantum Information Processing}, 15(5):2067--2090, 2016.

\bibitem{krawec2016quantum}
W.~O. Krawec.
\newblock Quantum key distribution with mismatched measurements over arbitrary
  channels.
\newblock {\em arXiv preprint arXiv:1608.07728}, 2016.

\bibitem{zha:qiu:mat:16}
W.~Zhang, D.~Qiu, and P.~Mateus.
\newblock Security of a single-state semi-quantum key distribution protocol.
\newblock {\em arXiv preprint arXiv:1612.03170}, 2016.

\bibitem{zou:qiu:li:wu:li:09}
X.~Zou, D.~Qiu, L.~Li, L.~Wu, and L.~Li.
\newblock Semiquantum-key distribution using less than four quantum states.
\newblock {\em Phys. Rev. A}, 79:052312, May 2009.

\bibitem{kra:14}
W.~O. Krawec.
\newblock Restricted attacks on semi-quantum key distribution protocols.
\newblock {\em Quantum Information Processing}, 13(11):2417--2436, November
  2014.

\bibitem{sca:bsc:csr:dus:lut:pee:09}
V.~Scarani, H.~Bechmann-Pasquinucci, N.~Cerf, M.~Du\ifmmode~\check{s}\else
  \v{s}\fi{}ek, N.~Lutkenhaus, and M.~Peev.
\newblock The security of practical quantum key distribution.
\newblock {\em Rev. Mod. Phys.}, 81:1301--1350, Sep 2009.

\bibitem{jai:sti:kha:els:mar:leu:16}
N.~Jain, B.~Stiller, I.~Khan, D.~Elser, C.~Marquardt, and G.~Leuchs.
\newblock Attacks on practical quantum key distribution systems (and how to
  prevent them).
\newblock {\em Contemporary Physics}, 57(3):366--387, 2016.

\bibitem{dia:lo:qi:yua:16}
E.~Diamanti, H.-K. Lo, B.~Qi, and Z.~Yuan.
\newblock Practical challenges in quantum key distribution.
\newblock {\em Npj Quantum Information}, 2:16025, 2016.

\bibitem{bed:arr:lin:17}
R.~Bedington, J.~M. Arrazola, and A.~Ling.
\newblock Progress in satellite quantum key distribution.
\newblock {\em Npj Quantum Information}, 3:30, 2017.

\bibitem{dix:etal:17}
A.~R. Dixon, J.~F. Dynes, M.~Lucamarini, B.~Fröhlich, A.~W. Sharpe, A.~Plews,
  W.~Tam, Z.~L. Yuan, Y.~Tanizawa, H.~Sato, S.~Kawamura, M.~Fujiwara,
  M.~Sasaki, and A.~J. Shields.
\newblock Quantum key distribution with hacking countermeasures and long term
  field trial.
\newblock {\em Scientific Reports}, 7:1978, 2017.

\bibitem{bra:lut:mor:san:00}
G.~Brassard, N.~L\"{u}tkenhaus, T.~Mor, and B.~Sanders.
\newblock Security aspects of practical quantum cryptography.
\newblock In {\em In Advances in Cryptology? EUROCRYPT'2000}, pages 289--299,
  2000.

\bibitem{lut:00}
N.~Lutkenhaus.
\newblock Security against individual attacks for realistic quantum key
  distribution.
\newblock {\em Phys. Rev. A}, 61:052304, Apr 2000.

\bibitem{lut:jah:02}
N.~Lutkenhaus and M.~Jahma.
\newblock Quantum key distribution with realistic states: photon-number
  statistics in the photon-number splitting attack.
\newblock {\em New Journal of Physics}, 4(1):44, 2002.

\bibitem{hwa:03}
W.-Y. Hwang.
\newblock Quantum key distribution with high loss: Toward global secure
  communication.
\newblock {\em Phys. Rev. Lett.}, 91:057901, Aug 2003.

\bibitem{lo:ma:che:05}
Hoi-Kwong Lo, Xiongfeng Ma, and Kai Chen.
\newblock Decoy state quantum key distribution.
\newblock {\em Phys. Rev. Lett.}, 94:230504, Jun 2005.

\bibitem{wan:05a}
X.-B. Wang.
\newblock Beating the photon-number-splitting attack in practical quantum
  cryptography.
\newblock {\em Phys. Rev. Lett.}, 94:230503, Jun 2005.

\bibitem{wan:05b}
X.-B. Wang.
\newblock Decoy-state protocol for quantum cryptography with four different
  intensities of coherent light.
\newblock {\em Phys. Rev. A}, 72:012322, Jul 2005.

\bibitem{har:ett:hug:nor:05}
J.~W. Harrington, J.~M. Ettinger, R.~J. Hughes, and J.~E. Nordholt.
\newblock Enhancing practical security of quantum key distribution with a few
  decoy states.
\newblock {\em arXiv preprint arXiv:0503002}, 2005.

\bibitem{ma:qi:zha:lo:05}
X.~Ma, B.~Qi, Y.~Zhao, and H.-K. Lo.
\newblock Practical decoy state for quantum key distribution.
\newblock {\em Phys. Rev. A}, 72:012326, Jul 2005.

\bibitem{wan:wan:bjo:kar:07}
Q.~Wang, X.-B. Wang, G.~Björk, and A.~Karlsson.
\newblock Improved practical decoy state method in quantum key distribution
  with parametric down-conversion source.
\newblock {\em EPL (Europhysics Letters)}, 79(4):40001, 2007.

\bibitem{ros:pet:har:ric:dal:tya:mcc:nam:bae:had:hug:nor:09}
D.~Rosenberg, C.~G. Peterson, J.~W. Harrington, P.~R. Rice, N.~Dallmann, K.~T.
  Tyagi, K.~P. McCabe, S.~Nam, B.~Baek, R.~H. Hadfield, R.~J. Hughes, and J.~E.
  Nordholt.
\newblock Practical long-distance quantum key distribution system using decoy
  levels.
\newblock {\em New Journal of Physics}, 11(4):045009, 2009.

\bibitem{luc:dyn:fro:yua:shi:15}
M.~Lucamarini, J.~F. Dynes, B.~Frohlich, Z.~Yuan, and A.~J. Shields.
\newblock Security bounds for efficient decoy-state quantum key distribution.
\newblock {\em IEEE Journal of Selected Topics in Quantum Electronics},
  21(3):1--8.

\bibitem{vak:mak:hje:01}
A.~Vakhitov, V.~Makarov, and D.~R. Hjelme.
\newblock Large pulse attack as a method of conventional optical eavesdropping
  in quantum cryptography.
\newblock {\em Journal of Modern Optics}, 48(13):2023--2038, 2001.

\bibitem{luc:cho:war:dyn:yua:shi:15}
M.~Lucamarini, I.~Choi, M.~B. Ward, J.~F. Dynes, Z.~L. Yuan, and A.~J. Shields.
\newblock Practical security bounds against the trojan-horse attack in quantum
  key distribution.
\newblock {\em Phys. Rev. X}, 5:031030, Sep 2015.

\bibitem{saj:min:jai:mak:17}
S.~Sajeed, C.~Minshull, N.~Jain, and V.~Makarov.
\newblock Invisible trojan-horse attack.
\newblock {\em Scientific Reports}, 7:8403, 2017.

\bibitem{kur:zar:may:wei:01}
C.~Kurtsiefer, P.~Zarda, S.~Mayer, and H.~Weinfurter.
\newblock The breakdown flash of silicon avalanche photodiodes-back door for
  eavesdropper attacks?
\newblock {\em Journal of Modern Optics}, 48(13):2039--2047, 2001.

\bibitem{gis:fas:kra:zbi:rib:06}
N.~Gisin, S.~Fasel, B.~Kraus, H.~Zbinden, and G.~Ribordy.
\newblock Trojan-horse attacks on quantum-key-distribution systems.
\newblock {\em Phys. Rev. A}, 73:022320, Feb 2006.

\bibitem{liz:lop:lop:16}
L.~A. Lizama-Pérez, J.~M. López, and E.~De~Carlos~López.
\newblock Quantum key distribution in the presence of the intercept-resend with
  faked states attack.
\newblock {\em Entropy}, 19(1):4.

\bibitem{lyd:wie:wit:els:ska:mak:10}
L.~Lydersen, C.~Wiechers, C.~Wittmann, D.~Elser, J.~Skaar, and V.~Makarov.
\newblock Hacking commercial quantum cryptography systems by tailored bright
  illumination.
\newblock {\em Nature Photonics}, 4:686, 2010.

\bibitem{wie:lyd:wit:els:ska:mar:mak:leu:11}
C~Wiechers, L~Lydersen, C~Wittmann, D~Elser, J~Skaar, Ch~Marquardt, V~Makarov,
  and G~Leuchs.
\newblock After-gate attack on a quantum cryptosystem.
\newblock {\em New Journal of Physics}, 13(1):013043, 2011.

\bibitem{liu:lin:kur:ska:mak:ger:14}
Q.~Liu, A.~Lamas-Linares, C.~Kurtsiefer, J.~Skaar, V.~Makarov, and I.~Gerhardt.
\newblock A universal setup for active control of a single-photon detector.
\newblock {\em Review of Scientific Instruments}, 85(1):013108, 2014.

\bibitem{sti:14}
M.~Stipcevi\'c.
\newblock Preventing detector blinding attack and other random number generator
  attacks on quantum cryptography by use of an explicit random number
  generator.
\newblock {\em arXiv preprint arXiv:1403.0143}, 2014.

\bibitem{ele:ozh:kur:gol:mak:15}
M.S. Elezov, R.V. Ozhegov, Y.V. Kurochkin, G.N. Goltsman, and V.S. Makarov.
\newblock Countermeasures against blinding attack on superconducting nanowire
  detectors for qkd.
\newblock {\em EPJ Web of Conferences}, 103:10002, 2015.

\bibitem{wan:wan:qin:wei:zha:16}
J.~Wang, H.~Wang, X.~Qin, Z.~Wei, and Z.~Zhang.
\newblock The countermeasures against the blinding attack in quantum key
  distribution.
\newblock {\em The European Physical Journal D}, 70(1):5, Jan 2016.

\bibitem{lee:par:woo:par:kim:han:moo:16}
M.~S. Lee, B.~K. Park, M.~K. Woo, C.~H. Park, Y.-S. Kim, S.-W. Han, and
  S.~Moon.
\newblock Countermeasure against blinding attacks on low-noise detectors with a
  background-noise-cancellation scheme.
\newblock {\em Phys. Rev. A}, 94:062321, Dec 2016.

\bibitem{mak:ani:ska:06}
Vadim Makarov, Andrey Anisimov, and Johannes Skaar.
\newblock Effects of detector efficiency mismatch on security of quantum
  cryptosystems.
\newblock {\em Phys. Rev. A}, 74:022313, Aug 2006.

\bibitem{lam:kur:07}
A.~Lamas-Linares and C.~Kurtsiefer.
\newblock Breaking a quantum key distribution system through a timing side
  channel.
\newblock {\em Optics Express}, 15:9388--9393, 2007.

\bibitem{zha:fun:qi:che:lo:08}
Yi~Zhao, Chi-Hang~Fred Fung, Bing Qi, Christine Chen, and Hoi-Kwong Lo.
\newblock Quantum hacking: Experimental demonstration of time-shift attack
  against practical quantum-key-distribution systems.
\newblock {\em Phys. Rev. A}, 78:042333, Oct 2008.

\bibitem{qi:fun:lo:ma:07}
B.~Qi, C.-H.~F. Fung, H.-K Lo, and X.~Ma.
\newblock Time-shift attack in practical quantum cryptosystems.
\newblock {\em Quantum Information and Computation}, 7:73--82, 2007.

\bibitem{bec:tit:00}
H.~Bechmann-Pasquinucci and W.~Tittel.
\newblock Quantum cryptography using larger alphabets.
\newblock {\em Phys. Rev. A}, 61:062308, May 2000.

\bibitem{cer:bou:kar:gis:02}
N.~J. Cerf, M.~Bourennane, A.~Karlsson, and N.~Gisin.
\newblock Security of quantum key distribution using d-level systems.
\newblock {\em Phys. Rev. Lett.}, 88:127902, Mar 2002.

\bibitem{bru:chr:eke:eng:ber:kas:mac:03}
D.~Bruss, M.~Christandl, A.~Ekert, B.-G. Englert, D.~Kaszlikowski, and
  C.~Macchiavello.
\newblock Tomographic quantum cryptography: Equivalence of quantum and
  classical key distillation.
\newblock {\em Phys. Rev. Lett.}, 91:097901, Aug 2003.

\bibitem{nik:alb:05}
G.~M. Nikolopoulos and G.~Alber.
\newblock Security bound of two-basis quantum-key-distribution protocols using
  qudits.
\newblock {\em Phys. Rev. A}, 72:032320, Sep 2005.

\bibitem{she:sca:10}
L.~Sheridan and V.~Scarani.
\newblock Security proof for quantum key distribution using qudit systems.
\newblock {\em Phys. Rev. A}, 82:030301, Sep 2010.

\bibitem{cha:15}
H.~F. Chau.
\newblock Quantum key distribution using qudits that each encode one bit of raw
  key.
\newblock {\em Phys. Rev. A}, 92:062324, Dec 2015.

\bibitem{nik:ran:alb:06}
G.~M. Nikolopoulos, K.~S. Ranade, and G.~Alber.
\newblock Error tolerance of two-basis quantum-key-distribution protocols using
  qudits and two-way classical communication.
\newblock {\em Phys. Rev. A}, 73:032325, Mar 2006.

\bibitem{cha:05}
H.~F. Chau.
\newblock Unconditionally secure key distribution in higher dimensions by
  depolarization.
\newblock {\em IEEE Transactions On Information Theory}, 51:1451--1468, 2005.

\bibitem{lu:fun:ma:cai:11}
H.~Lu, C.-H.~F. Fung, X.~Ma, and Q.-Y. Cai.
\newblock Unconditional security proof of a deterministic quantum key
  distribution with a two-way quantum channel.
\newblock {\em Phys. Rev. A}, 84:042344, Oct 2011.

\bibitem{bea:luc:man:ren:13}
N.~J. Beaudry, M.~Lucamarini, S.~Mancini, and R.~Renner.
\newblock Security of two-way quantum key distribution.
\newblock {\em Physical Review A}, 88(6):062302, 2013.

\bibitem{xu:che:dou:yan:li:15}
Gang Xu, Xiu-Bo Chen, Zhao Dou, Yi-Xian Yang, and Zongpeng Li.
\newblock A novel protocol for multiparty quantum key management.
\newblock {\em Quantum Information Processing}, 14(8):2959--2980, August 2015.

\bibitem{get:10}
M.~McGettrick.
\newblock One dimensional quantum walks with memory.
\newblock {\em Quantum Info. Comput.}, 10(5):509--524, May 2010.

\bibitem{get:jar:mis:14}
M.~Gettrick and J.~A. Miszczak.
\newblock Quantum walks with memory on cycles.
\newblock {\em Physica A: Statistical Mechanics and its Applications}, 399:163
  -- 170, 2014.

\bibitem{roh:bre:gil:13}
P.~P. Rohde, G.~K. Brennen, and A.~Gilchrist.
\newblock Quantum walks with memory provided by recycled coins and a memory of
  the coin-flip history.
\newblock {\em Phys. Rev. A}, 87:052302, May 2013.

\bibitem{kra:15}
W.~O. Krawec.
\newblock History dependent quantum walk on the cycle with an unbalanced coin.
\newblock {\em Physica A: Statistical Mechanics and its Applications}, 428:319
  -- 331, 2015.

\bibitem{bru:car:amb:03}
T.~A. Brun, H.~A. Carteret, and A.~Ambainis.
\newblock Quantum walks driven by many coins.
\newblock {\em Phys. Rev. A}, 67:052317, May 2003.

\bibitem{bro:etal:10}
M.~A. Broome, A.~Fedrizzi, B.~P. Lanyon, I.~Kassal, A.~Aspuru-Guzik, and A.~G.
  White.
\newblock Discrete single-photon quantum walks with tunable decoherence.
\newblock {\em Phys. Rev. Lett.}, 104:153602, Apr 2010.

\bibitem{san:etal:12}
L.~Sansoni, F.~Sciarrino, G.~Vallone, P.~Mataloni, A.~Crespi, R.~Ramponi, and
  R.~Osellame.
\newblock Two-particle bosonic-fermionic quantum walk via integrated photonics.
\newblock {\em Phys. Rev. Lett.}, 108:010502, Jan 2012.

\bibitem{goyal:etal:13}
S.~K. Goyal, F.~S. Roux, A.~Forbes, and T.~Konrad.
\newblock Implementing quantum walks using orbital angular momentum of
  classical light.
\newblock {\em Phys. Rev. Lett.}, 110:263602, Jun 2013.

\bibitem{goyal:etal:15}
S.~K. Goyal, F.~S. Roux, A.~Forbes, and T.~Konrad.
\newblock Implementation of multidimensional quantum walks using linear optics
  and classical light.
\newblock {\em Phys. Rev. A}, 92:040302, Oct 2015.

\bibitem{knight:etal:03}
P.~L. Knight, E.~Roldan, and J.~E. Sipe.
\newblock Quantum walk on the line as an interference phenomenon.
\newblock {\em Phys. Rev. A}, 68:020301, Aug 2003.

\bibitem{schreiber:etal:10}
A.~Schreiber, K.~N. Cassemiro, V.~Potocek, A.~Gabris, P.~J. Mosley,
  E.~Andersson, I.~Jex, and Ch. Silberhorn.
\newblock Photons walking the line: A quantum walk with adjustable coin
  operations.
\newblock {\em Phys. Rev. Lett.}, 104:050502, Feb 2010.

\bibitem{rol:sor:06}
E.~Roldan and J.~C. Soriano.
\newblock Optical implementability of the two-dimensional quantum walk.
\newblock {\em Journal of Modern Optics}, 52:2649--2657, 2006.

\bibitem{schreiber:etal:12}
A.~Schreiber, A.~Gabris, K.~Rohde, P. P.and~Laiho, M.~Stefanak, V.~Potocek,
  I.~Hamilton, C.and~Jex, and C.~Silberhorn.
\newblock A 2d quantum walk simulation of two-particle dynamics.
\newblock {\em Science}, 336:55--58, 2012.

\bibitem{cardano:etal:15}
F.~Cardano, F.~Massa, H.~Qassim, E.~Karimi, S.~Slussarenko, D.~Paparo,
  C.~de~Lisio, F.~Sciarrino, E.~Santamato, R.~W. Boyd, and L.~Marrucci.
\newblock Quantum walks and wavepacket dynamics on a lattice with twisted
  photons.
\newblock {\em Science Advances}, 1:e1500087, 2015.

\bibitem{man:wan:13}
K.~Manouchehri and J.~Wang.
\newblock {\em Physical Implementation of Quantum Walks}.
\newblock Springer Publishing Company, Incorporated, 2013.

\end{thebibliography}

\end{document}